\newcommand{\mb}[1]{{  \mathbf  #1}}  
\begin{document}
\newtheorem{theorem}{Theorem}
\newtheorem{acknowledgement}[theorem]{Acknowledgement}
\newtheorem{axiom}[theorem]{Axiom}
\newtheorem{case}[theorem]{Case}
\newtheorem{claim}[theorem]{Claim}
\newtheorem{conclusion}[theorem]{Conclusion}
\newtheorem{condition}[theorem]{Condition}
\newtheorem{conjecture}[theorem]{Conjecture}
\newtheorem{criterion}[theorem]{Criterion}
\newtheorem{definition}{Definition}
\newtheorem{exercise}[theorem]{Exercise}
\newtheorem{lemma}{Lemma}
\newtheorem{corollary}{Corollary}
\newtheorem{notation}[theorem]{Notation}
\newtheorem{problem}[theorem]{Problem}
\newtheorem{proposition}{Proposition}
\newtheorem{solution}[theorem]{Solution}
\newtheorem{summary}[theorem]{Summary}
\newtheorem{assumption}{Assumption}
\newtheorem{example}{\bf Example}
\newtheorem{remark}{\bf Remark}

\newtheorem{thm}{Corollary}[section]
\renewcommand{\thethm}{\arabic{section}.\arabic{thm}}

\def\qed{$\Box$}
\def\QED{\mbox{\phantom{m}}\nolinebreak\hfill$\,\Box$}
\def\proof{\noindent{\emph{Proof:} }}
\def\poof{\noindent{\emph{Sketch of Proof:} }}
\def
\endproof{\hspace*{\fill}~\qed
\par
\endtrivlist\unskip}
\def\endproof{\hspace*{\fill}~\qed\par\endtrivlist\vskip3pt}

\def\E{\mathsf{E}}
\def\eps{\varepsilon}
\def\phi{\varphi}
\def\Lsp{{\boldsymbol L}}
\def\Bsp{{\boldsymbol B}}
\def\lsp{{\boldsymbol\ell}}
\def\Ltsp{{\Lsp^2}}
\def\Lpsp{{\Lsp^p}}
\def\Linsp{{\Lsp^{\infty}}}
\def\LtR{{\Lsp^2(\Rst)}}
\def\ltZ{{\lsp^2(\Zst)}}
\def\ltsp{{\lsp^2}}
\def\ltZt{{\lsp^2(\Zst^{2})}}
\def\ninN{{n{\in}\Nst}}
\def\oh{{\frac{1}{2}}}
\def\grass{{\cal G}}
\def\ord{{\cal O}}
\def\dist{{d_G}}
\def\conj#1{{\overline#1}}
\def\ntoinf{{n \rightarrow \infty}}
\def\toinf{{\rightarrow \infty}}
\def\tozero{{\rightarrow 0}}
\def\trace{{\operatorname{trace}}}
\def\ord{{\cal O}}
\def\UU{{\cal U}}
\def\rank{{\operatorname{rank}}}
\def\acos{{\operatorname{acos}}}

\def\SINR{\mathsf{SINR}}
\def\SNR{\mathsf{SNR}}
\def\SIR{\mathsf{SIR}}
\def\tSIR{\widetilde{\mathsf{SIR}}}
\def\Ei{\mathsf{Ei}}
\def\l{\left}
\def\r{\right}
\def\lb{\left\{}
\def\rb{\right\}}

\setcounter{page}{1}

\newcommand{\eref}[1]{(\ref{#1})}
\newcommand{\fig}[1]{Fig.\ \ref{#1}}

\def\bydef{:=}
\def\ba{{\mathbf{a}}}
\def\bb{{\mathbf{b}}}
\def\bc{{\mathbf{c}}}
\def\bd{{\mathbf{d}}}
\def\bee{{\mathbf{e}}}
\def\bff{{\mathbf{f}}}
\def\bg{{\mathbf{g}}}
\def\bh{{\mathbf{h}}}
\def\bi{{\mathbf{i}}}
\def\bj{{\mathbf{j}}}
\def\bk{{\mathbf{k}}}
\def\bl{{\mathbf{l}}}
\def\bm{{\mathbf{m}}}
\def\bn{{\mathbf{n}}}
\def\bo{{\mathbf{o}}}
\def\bp{{\mathbf{p}}}
\def\bq{{\mathbf{q}}}
\def\br{{\mathbf{r}}}
\def\bs{{\mathbf{s}}}
\def\bt{{\mathbf{t}}}
\def\bu{{\mathbf{u}}}
\def\bv{{\mathbf{v}}}
\def\bw{{\mathbf{w}}}
\def\bx{{\mathbf{x}}}
\def\by{{\mathbf{y}}}
\def\bz{{\mathbf{z}}}
\def\b0{{\mathbf{0}}}

\def\bA{{\mathbf{A}}}
\def\bB{{\mathbf{B}}}
\def\bC{{\mathbf{C}}}
\def\bD{{\mathbf{D}}}
\def\bE{{\mathbf{E}}}
\def\bF{{\mathbf{F}}}
\def\bG{{\mathbf{G}}}
\def\bH{{\mathbf{H}}}
\def\bI{{\mathbf{I}}}
\def\bJ{{\mathbf{J}}}
\def\bK{{\mathbf{K}}}
\def\bL{{\mathbf{L}}}
\def\bM{{\mathbf{M}}}
\def\bN{{\mathbf{N}}}
\def\bO{{\mathbf{O}}}
\def\bP{{\mathbf{P}}}
\def\bQ{{\mathbf{Q}}}
\def\bR{{\mathbf{R}}}
\def\bS{{\mathbf{S}}}
\def\bT{{\mathbf{T}}}
\def\bU{{\mathbf{U}}}
\def\bV{{\mathbf{V}}}
\def\bW{{\mathbf{W}}}
\def\bX{{\mathbf{X}}}
\def\bY{{\mathbf{Y}}}
\def\bZ{{\mathbf{Z}}}

\def\bxi{{\boldsymbol{\xi}}}

\def\sT{{\mathsf{T}}}
\def\sH{{\mathsf{H}}}
\def\cmp{{\text{cmp}}}
\def\cmm{{\text{cmm}}}
\def\WPT{{\text{WPT}}}
\def\lo{{\text{lo}}}
\def\gl{{\text{gl}}}

\def\tT{{\widetilde{T}}}
\def\tF{{\widetilde{F}}}
\def\tP{{\widetilde{P}}}
\def\tG{{\widetilde{G}}}
\def\tbh{{\widetilde{\mathbf{h}}}}
\def\tbg{{\widetilde{\mathbf{g}}}}

\def\mA{{\mathbb{A}}}
\def\mB{{\mathbb{B}}}
\def\mC{{\mathbb{C}}}
\def\mD{{\mathbb{D}}}
\def\mE{{\mathbb{E}}}
\def\mF{{\mathbb{F}}}
\def\mG{{\mathbb{G}}}
\def\mH{{\mathbb{H}}}
\def\mI{{\mathbb{I}}}
\def\mJ{{\mathbb{J}}}
\def\mK{{\mathbb{K}}}
\def\mL{{\mathbb{L}}}
\def\mM{{\mathbb{M}}}
\def\mN{{\mathbb{N}}}
\def\mO{{\mathbb{O}}}
\def\mP{{\mathbb{P}}}
\def\mQ{{\mathbb{Q}}}
\def\mR{{\mathbb{R}}}
\def\mS{{\mathbb{S}}}
\def\mT{{\mathbb{T}}}
\def\mU{{\mathbb{U}}}
\def\mV{{\mathbb{V}}}
\def\mW{{\mathbb{W}}}
\def\mX{{\mathbb{X}}}
\def\mY{{\mathbb{Y}}}
\def\mZ{{\mathbb{Z}}}

\def\cA{\mathcal{A}}
\def\cB{\mathcal{B}}
\def\cC{\mathcal{C}}
\def\cD{\mathcal{D}}
\def\cE{\mathcal{E}}
\def\cF{\mathcal{F}}
\def\cG{\mathcal{G}}
\def\cH{\mathcal{H}}
\def\cI{\mathcal{I}}
\def\cJ{\mathcal{J}}
\def\cK{\mathcal{K}}
\def\cL{\mathcal{L}}
\def\cM{\mathcal{M}}
\def\cN{\mathcal{N}}
\def\cO{\mathcal{O}}
\def\cP{\mathcal{P}}
\def\cQ{\mathcal{Q}}
\def\cR{\mathcal{R}}
\def\cS{\mathcal{S}}
\def\cT{\mathcal{T}}
\def\cU{\mathcal{U}}
\def\cV{\mathcal{V}}
\def\cW{\mathcal{W}}
\def\cX{\mathcal{X}}
\def\cY{\mathcal{Y}}
\def\cZ{\mathcal{Z}}
\def\cd{\mathcal{d}}
\def\Mt{M_{t}}
\def\Mr{M_{r}}
\def\O{\Omega_{M_{t}}}
\newcommand{\figref}[1]{{Fig.}~\ref{#1}}
\newcommand{\tabref}[1]{{Table}~\ref{#1}}

\newcommand{\fb}{\tx{fb}}
\newcommand{\nf}{\tx{nf}}
\newcommand{\BC}{\tx{(bc)}}
\newcommand{\MAC}{\tx{(mac)}}
\newcommand{\Pout}{p_{\mathsf{out}}}
\newcommand{\nnn}{\nn\\}
\newcommand{\FB}{\tx{FB}}
\newcommand{\TX}{\tx{TX}}
\newcommand{\RX}{\tx{RX}}
\renewcommand{\mod}{\tx{mod}}
\newcommand{\m}[1]{\mathbf{#1}}
\newcommand{\td}[1]{\tilde{#1}}
\newcommand{\sbf}[1]{\scriptsize{\textbf{#1}}}
\newcommand{\stxt}[1]{\scriptsize{\textrm{#1}}}
\newcommand{\suml}[2]{\sum\limits_{#1}^{#2}}
\newcommand{\sumlk}{\sum\limits_{k=0}^{K-1}}
\newcommand{\eqhsp}{\hspace{10 pt}}
\newcommand{\tx}[1]{\texttt{#1}}
\newcommand{\Hz}{\ \tx{Hz}}
\newcommand{\sinc}{\tx{sinc}}
\newcommand{\diag}{\mathrm{diag}}
\newcommand{\MAI}{\tx{MAI}}
\newcommand{\ISI}{\tx{ISI}}
\newcommand{\IBI}{\tx{IBI}}
\newcommand{\CN}{\tx{CN}}
\newcommand{\CP}{\tx{CP}}
\newcommand{\ZP}{\tx{ZP}}
\newcommand{\ZF}{\tx{ZF}}
\newcommand{\SP}{\tx{SP}}
\newcommand{\MMSE}{\tx{MMSE}}
\newcommand{\MINF}{\tx{MINF}}
\newcommand{\RC}{\tx{MP}}
\newcommand{\MBER}{\tx{MBER}}
\newcommand{\MSNR}{\tx{MSNR}}
\newcommand{\MCAP}{\tx{MCAP}}
\newcommand{\vol}{\tx{vol}}
\newcommand{\ah}{\hat{g}}
\newcommand{\tg}{\tilde{g}}
\newcommand{\teta}{\tilde{\eta}}
\newcommand{\heta}{\hat{\eta}}
\newcommand{\uh}{\m{\hat{s}}}
\newcommand{\eh}{\m{\hat{\eta}}}
\newcommand{\hv}{\m{h}}
\newcommand{\hh}{\m{\hat{h}}}
\newcommand{\Po}{P_{\mathrm{out}}}
\newcommand{\Poh}{\hat{P}_{\mathrm{out}}}
\newcommand{\Ph}{\hat{\gamma}}
\newcommand{\mat}[1]{\begin{matrix}#1\end{matrix}}
\newcommand{\ud}{^{\dagger}}
\newcommand{\C}{\mathcal{C}}
\newcommand{\nn}{\nonumber}
\newcommand{\nInf}{U\rightarrow \infty}

\title{MIMO Precoding for Rydberg Atomic Receivers}
\author{{Mingyao Cui,~\IEEEmembership{Graduate Student Member, IEEE}, Qunsong Zeng,~\IEEEmembership{Member, IEEE}, and Kaibin Huang,~\IEEEmembership{Fellow, IEEE}}
\thanks{M. Cui, Q. Zeng, and K. Huang are with the Department of Electrical and Electronic Engineering, The University of Hong Kong, Hong Kong (Email: \{mycui,qszeng,huangkb\}@eee.hku.hk). Corresponding authors: Q. Zeng; K. Huang.}}

\maketitle


\begin{abstract}
Leveraging the strong atom-light interaction, a Rydberg atomic receiver can measure radio waves with extreme sensitivity. Existing research primarily focuses on improving the architecture and signal detection capability of atomic receivers, while traditional signal processing schemes at the transmitter side have remained unchanged.
As a result, these schemes fail to maximize the throughput of atomic receivers, given that the coupling between atomic dipole moment and radio-wave magnitude results in a \emph{nonlinear} transmission model in contrast to the traditional linear one. 
To address this issue, our work proposes to design customized precoding techniques for atomic multiple-input-multiple-output (MIMO) systems to achieve the channel capacity.
A strong-reference approximation is initially proposed to linearize the \emph{nonlinear transition} model of atomic receivers. This facilitates the derivation of atomic-MIMO channel capacity as $\min(N_r/2, N_t)\log({\rm SNR})$ at high signal-to-noise ratios (SNRs) for $N_r$ receive atomic antennas and $N_t$ classic transmit antennas. Then, a new digital precoding technique, termed In-phase-and-Quadrature (IQ) aware precoding is presented, which features independent processing of I/Q data streams using four real-valued matrices. The design is shown to be capacity-achieving for the atomic MIMO system. In addition, for the case of large-scale MIMO system, we extend the preceding fully-digital precoding design to the popular hybrid precoding architecture, which cascades a classical analog precoder with a low-dimensional version of the proposed IQ-aware digital precoder. By alternatively optimizing the digital and analog parts, the hybrid design is able to approach the performance of the optimal IQ-aware fully digital precoding. Simulation results validate the superiority of proposed IQ-aware precoding methods over existing techniques in the context of atomic MIMO communication.
\end{abstract}

\begin{IEEEkeywords}
Atomic receivers, multiple-input-multiple-output (MIMO), precoding.
\end{IEEEkeywords}

\section{Introduction}
Ever since Hertz validated the existence of electromagnetic (EM) waves, the accurate measurement of EM waves has become the cornerstone of modern wireless communication, remote sensing, and radar~\cite{kaushik_toward_2024, kraus_heinrich_1988}. 
    To date, a variety of antenna technologies have been developed, e.g., dipole antennas and aperture antennas.  These technologies are based on the same measurement principle: an incident EM wave drives an antenna to generate induced currents, which results in informative electrical signals. This mechanism imposes a notable limitation on the sensitivity (i.e., minimum detectable field strength) of traditional antennas. According to the fluctuation-dissipation theorem \cite{nyquist_thermal_1928}, the random movement of free electrons within metal generates indistinguishable thermal noise, known as Johnson-Nyquist noise, that contaminates electrical signals. As a result, traditional antennas typically achieve a sensitivity on the order of $1{\rm uV}/{\rm m}/\sqrt{\rm Hz}$~\cite{QSN_Bussey2022}, making ultra-precise EM-field measurement challenging to achieve \cite{QuanSense_Zhang2023, sedlacek_microwave_2012}.

Recently, at the intersection of quantum sensing and wireless communication, an emerging antenna technology known as \emph{Rydberg atomic receiver} has shown the potential in surpassing the sensitivity limitations of traditional antennas~\cite{zhang_rydberg_2024, RydMag_Fancher2021}. 
Rydberg atoms, referring to excited atoms operating at high quantum states \cite{RydReview_Saffman2010}, can strongly interact with incident EM waves, triggering electron transitions between resonant energy levels~\cite{AtomicPhysics}. Receivers based on Rydberg atoms can monitor the strength of these transitions by leveraging quantum phenomena, such as \emph{ac-Stark shift} and \emph{electromagnetically induced transparency} (EIT), to detect information carried by the fields~\cite{RydMag_Fancher2021}. 
Rydberg atomic receivers can overcome the limitations imposed by thermal noise, as the interaction between EM fields and atoms theoretically introduces no thermal noise. Furthermore, the quantum shot noise induced during the observation of Rydberg atoms' quantum states is typically orders of magnitude smaller than the thermal noise of metal antennas \cite{QuanSense_Zhang2023}. 
With these advantages, atomic receivers can theoretically achieve measurement sensitivity determined by standard quantum limit, e.g., $1 {\rm nV}/{\rm m}/\sqrt{\rm Hz}$ \cite{fan_atom_2015}. 
Thus, such receivers show a promise for use in long-range communication scenarios, such as satellite communications and space-air-ground integrated networks~\cite{liu_space-air-ground_2018}, where detecting weak signals is crucial. 
Aligned with this vision, our work advocates the adoption of atomic receivers as a revolutionary approach to enhance wireless communication systems.

Existing research on the application of Rydberg atomic receivers in detecting modulated wireless signals primarily focuses on the lab demonstration of \emph{single-input-single-output} (SISO) air interfaces. 
Specifically, the EM field is captured by an atomic parameter called \emph{Rabi frequency},  which characterizes the electron transition intensity. 
By inferring symbols from Rabi frequencies, researchers
have successfully demonstrated the detection of amplitude-modulated (AM) and frequency-modulated (FM) signals \cite{kumar_rydberg-atom_2017, RydAMFM_Anderson2021, yuan_rydberg_2023, RydAM_Zhen2019}. To enable phase detection, 
techniques such as holographic phase-sensing or heterodyne measurement have been employed~\cite{RydPhase_And2020}. These techniques utilize a \emph{local oscillator} (LO) to send a known reference wave that interferes with the signal wave, thereby allowing phase-modulated symbols to be extracted from the phase difference between the two waves. This method has facilitated the implementation of \emph{phase-shift keying} (PSK) and \emph{quadrature-amplitude modulation} (QAM) communication systems~\cite{RydPhase_Simons2019, Rydphase_Holloway2019,RydNP_Jing2020}. 
Another representative application involves the concurrent detection of multi-frequency signals spanning GHz frequency bands \cite{RydMultiband_holloway2021, RydMultiband_Du2022, RydMultiband_Meyer2023}. This is achieved either by mixing different species of alkali atoms within a single receiver or by exciting the atoms to distinct Rydberg quantum states.

In addition to the SISO air interface, recent studies have also explored the spatial diversity and multiplexing gains of atomic antenna arrays. In \cite{AtomicSIMO_Otto2021}, a \emph{single-input-multiple-output} (SIMO) system was designed with four atomic antennas to recover AM signals. 
A notable observation from this study was that the experimental receive \emph{signal-to-noise ratio} (SNR) scales linearly with the number of atomic antennas, as in traditional systems.
This atomic SIMO configuration was later implemented for angle-of-arrival estimation using the heterodyne measurement scheme \cite{RydAOA_Robinson2021, AtomicAoA_Richardson2024}. To further reap the multiplexing gain, our prior work developed a multi-user signal detection model for atomic \emph{multiple-input-multiple-output} (MIMO) receivers \cite{AtomicMIMO_Cui2024}. We proved that atomic MIMO receivers exhibit a \emph{non-linear} input-output relationship modeled by \emph{biased phase retrieval} \cite{PR_Dong2023}, in contrast to the \emph{linear transition model} in classic MIMO systems.  Subsequently, a maximum likelihood-based algorithm was proposed for the simultaneous detection of multi-user signals.

Previous efforts have primarily concentrated on improving the receiver architectures and detection algorithms, while the 
design of a classic transmitter optimized for an atomic receiver remains largely 
uncharted. Attributed to the altered input-output relationship, existing transmitter-side signal processing methods may not be optimal for atomic receivers. 
For instance, classic MIMO communications rely on \emph{singular value decomposition} (SVD)-based precoding at the transmitter side to achieve the channel capacity $\min(N_r, N_t)\log {\rm SNR} + \mathcal{O}(1)$, where $N_r$ and $N_t$ represent
the numbers of antennas at the receiver and transmitter, respectively~\cite{capacity_telatar1999, zheng_diversity_2003}. 
However, this 
capacity is based on the \emph{linear transition model}, while the atomic counterpart is non-linear. This difference raises three critical questions that remain unaddressed in the existing literature: 
\begin{itemize}
    \item \emph{What is the information-theoretical limit of an atomic MIMO system?}
    \item \emph{How can we approach this limit through transmitter-side signal processing?}
    \item \emph{Is there any loss in degree-of-freedoms (DoFs) in atomic MIMO systems due to the non-linear transition?}
\end{itemize}

Addressing these questions is pivotal for unleashing the full potential of atomic receivers for next-generation communication systems. 
To this end, this work represents the first attempt on designing the transmit precoding to approach the channel capacity of Rydberg atomic MIMO systems. 
We consider the point-to-point MIMO communications supported by the mentioned heterodyne measurement scheme~\cite{RydPhase_And2020}.
The key contributions are summarized as follows.
\begin{itemize}
    \item \textbf{Channel capacity with atomic MIMO receivers}: 
    The non-linear input-output relationship of atomic MIMO receivers poses a challenge to express its mutual information analytically. To overcome this challenge, we propose a strong-reference approximation to linearize the transition model, which is motivated by the fact that the reference wave is much stronger than the signal wave 
    due to the proximity of the LOs to atomic receivers
    \cite{RydPhase_Simons2019, Rydphase_Holloway2019, RydNP_Jing2020}. 
    Based on this approximation, we can convert the \emph{non-linear magnitude detection} model of atomic receivers into a \emph{linear real-part detection} model by discarding high-order infinitesimal quantities, which allows us to obtain the expression of channel capacity.  
    \item \textbf{IQ-aware digital precoding}: 
    An IQ-aware digital precoding is proposed to achieve the channel capacity. 
    Traditional MIMO precoding widely adopts a complex-valued precoder to process in-phase and quadrature (IQ) baseband symbols. 
    This method can only produce \emph{circularly-symmetric Gaussian} distributed signals, which are not optimal for the magnitude-detection based atomic MIMO receivers.
    In contrast, the proposed IQ-aware digital precoding employs four real-valued matrices to independently process the I/Q baseband symbols, which has the ability to produce \emph{general Gaussian} distributed signals.  
    The design is shown to achieve the channel capacity with atomic receivers, which is proven to be $\min(N_r/2, N_t)\log{\rm SNR}$ at high SNRs. This capacity result reveals that as long as the number of receive atomic antennas, $N_r$, is at least twice the number of transmit traditional antennas, $N_t$, atomic MIMO receivers experience no loss in spatial DoFs with respect to the traditional counterparts.   
    \item \textbf{IQ-aware hybrid precoding}: 
For the case of large-scale MIMO systems, we extend the preceding fully digital precoding design to the popular hybrid precoding architecture~\cite{HB_Gao}. Our design cascades the high-dimensional complex analog precoder implemented using \emph{phase shifters} (PSs)  with  a low-dimensional version of the proposed IQ-aware digital precoder, where both the fully-connected (FC) and the sub-connected (SC) PS networks are taken into account.
We exploit the alternating minimization approach to alternatively update the digital and analog precoders, with the objective of minimizing the Frobenius distance between the previous IQ-aware fully digital precoder and the current IQ-aware hybrid precoder. 
Numerical results validate their close performance.
Moreover, we show that the proposed IQ-aware hybrid precoding schemes significantly outperform classical hybrid precoding methods in the context of atomic MIMO systems. 
\end{itemize}

\emph{Organization:} 
The remainder of this paper is organized as follows. The system model of atomic MIMO receivers is provided in Section~\ref{sec:2}. In Section~\ref{sec:3}, we formulate the system channel capacity using strong-reference approximation. 
The IQ-aware digital precoding for achieving the channel capacity is discussed Section~\ref{sec:4}.
The IQ-aware hybrid precoding design for the FC and SC architectures are presented in Section~\ref{sec:5}.
In Section~\ref{sec:6}, numerical results are presented, followed by conclusions drawn in Section~\ref{sec:7}. 

\emph{Notations:} 
Lower-case and upper-case boldface letters represent
vectors and matrices, respectively. 
$\mb{X}_{m,n}$, $\mb{X}_{m,:}$, and $\mb{X}_{:, n}$ represent the $(m,n)$-th entry, the $m$-th row, and the $n$-th column of matrix $\mb{X}$, respectively. Moreover, $\mb{X}_{m:n,:}$ and $\mb{X}_{:, m:n}$ are defined as the $m$ to $n$ columns and $m$ to $n$ rows of $\mb{X}$. 
${[\cdot]^{-1}}$, ${[\cdot]^{\dag}}$, 
${[\cdot]^{*}}$, ${[\cdot]^{\rm 
T}}$, and ${[\cdot]^{\rm H}}$ denote the inverse, pseudo-inverse, conjugate, 
transpose, 
and conjugate-transpose operations, respectively. $\|\cdot\|_2$ denotes the 
$l_2$-norm of the argument. $\|\cdot\|_F$ denotes the Frobenius norm of the 
argument. 
$|\cdot|$ denotes the element-wise magnitude of its argument. 
$\mathsf{Tr}(\cdot)$ denotes the trace operator.
${\mathsf E}\left(\cdot\right)$ is the expectation operator.
$\Re(\cdot)$ denotes the real part of the argument. ${\rm arg}(\cdot)$ is the phase of the argument.
$\mathcal{CN}\left({\boldsymbol{\mu}}, {\boldsymbol \Sigma } \right)$ ($\mathcal{N}\left({\boldsymbol{\mu}}, {\boldsymbol \Sigma } \right)$) denotes the complex (real) Gaussian distribution with mean ${\boldsymbol 
	\mu}$ and covariance ${\boldsymbol \Sigma }$. ${\cal U}\left(a,b\right)$ denotes 
	the 
uniform distribution between $a$ and $b$. $\mathbf{I}_{L}$ is an $L\times L$ 
identity matrix.

\section{Review of Rydberg Atomic MIMO System}\label{sec:2}
\begin{figure}[t!]
\centering
\includegraphics[width=3.5in]{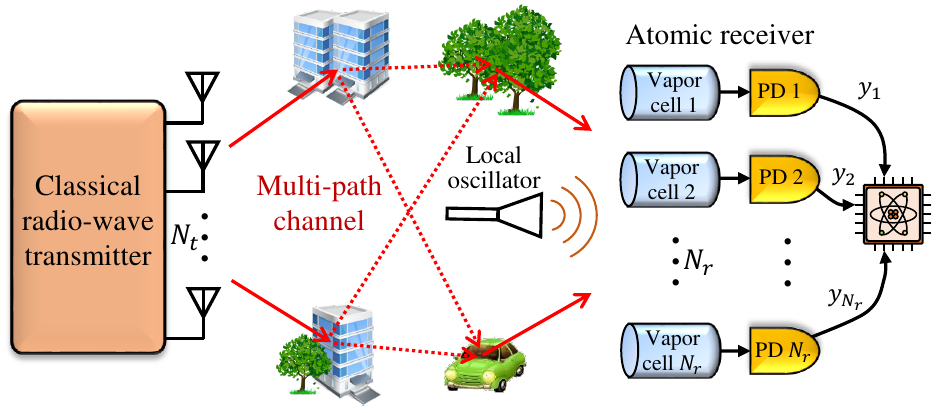}
\vspace*{-1em}
\caption{Rydberg atomic MIMO system.
}
\label{img:layout}
\vspace*{-1em}
\end{figure}

To begin with, we briefly review the system model of an atomic MIMO communication system. Readers can refer to our prior work \cite{AtomicMIMO_Cui2024} for a more comprehensive understanding of Rydberg atomic receivers. 
As illustrated in Fig.~\ref{img:layout}, the transmitter employs a conventional $N_t$-element antenna array for data transmission.  The time-domain radiated signal from the $n$-th antenna, $n\in\{1,2,\cdots,N_t\}$, is represented as $x_{n}(t) = A_n\cos(\omega t + \gamma_n)$. 
To realize phase demodulation at the receiver side,  the heterodyne 
measurement scheme, also known as the holographic-phase sensing measurement, is adopted~\cite{RydPhase_And2020}. This scheme introduces a LO as an auxiliary source to radiate predetermined reference signals to the atomic receiver. For ease of discussion, we assume that the reference signal,  $x_R(t)$, shares the same frequency with the transmitted signal, and thereby it is expressed as $x_R(t) = A_R \cos(\omega t + \gamma_R)$.

The atomic receiver deploys $N_r$ Rydberg atomic antennas for detecting the incident signals. Each antenna consists of a vapor cell filled with Alkali atoms. These atoms are excited to Rydberg states via two-photon transition process using a probe laser and a coupling laser\cite{QuanSense_Zhang2023}. The incident EM wave can strongly interact with the electric dipole moments of these Rydberg atoms, while the interaction results are read out by photodetectors (PDs) to perform signal detection~\cite{RydMag_Art2022}. Mathematically speaking, by adopting the multi-path wireless channel model, the EM wave $\mb{E}_m(t)$ impinging on the $m$-th atomic antenna, $m\in\{1,2,\cdots,N_r\}$, is written as
\begin{align}
    \mb{E}_m(t) &= \sum_{n=1}^{N_t}\sum_{\ell = 1}^L
    \boldsymbol{\epsilon}_{mnl}\rho_{mnl}A_{n}\cos(\omega t + \phi_{mnl} + \gamma_{n}) \notag\\
    & \quad\quad+ \boldsymbol{\epsilon}_{R,m}\rho_{R, m}A_{R}\cos(\omega t + \phi_{R,m} + \gamma_{R}).
\end{align}
Here, $L$ denotes the number of transmitter-to-receiver channel paths. Variables $\boldsymbol{\epsilon}_{mnl}$, $\rho_{mnl}$, and $\phi_{mnl}$ represent the polarization direction, path loss, and phase shift of the radio wave propagating from the $n$-th transmit antenna to the $m$-th atomic antenna via the $\ell$-th path. Variables $\boldsymbol{\epsilon}_{R,m}$, $\rho_{R,m}$, and $\phi_{R,m}$ refer to the channel parameters pertaining to the reference signal. Since the LO is typically deployed nearby the receiver, we directly employ the line-of-sight path to model the LO-to-receiver channel. 

Suppose the signal frequency of the incident EM wave, $\omega$, is tuned to resonant with the transition frequency, $\omega_{ge}$, between two specific energy levels of Rydberg atoms, denoted as $\ket{g}$ and $\ket{e}$. Then, 
governed by the Lindblad master equation (Schrödinger equation in open quantum systems)~\cite{QuantumOptics_Fox2006}, the electrons of Rydberg atoms will oscillate between these two energy levels, a phenomenon referred to as \emph{electron transition}~\cite{AtomicMIMO_Cui2024}. 
The transition intensity is determined by a physical parameter called \emph{Rabi frequency}~\cite{QuantumOptics_Fox2006}: 
\begin{align}
    \Omega_m = \frac{|\boldsymbol{\mu}^T_{\rm RF}\mb{E}_{{\rm RF}, m}|}{\hbar}.
\end{align}
where $\boldsymbol{\mu}_{\rm RF}$ represents the electric dipole moment of Rydberg atoms, and $\hbar$ denotes the reduced Planck constant. Applying the rotating wave approximation~\cite{QuantumOptics_Fox2006}, our prior work \cite{AtomicMIMO_Cui2024} reveals that the vector $\mb{E}_{{\rm RF}, m}$ can be equivalently expressed as the down-converted complex form of $\mb{E}_{m}(t)$, i.e., 
\begin{align}
    \mb{E}_{{\rm RF}, m} &= \sum_{n=1}^{N_t}\sum_{\ell = 1}^L\boldsymbol{\epsilon}_{mnl}\rho_{mnl}A_{n}e^{j(\phi_{mnl} + \gamma_{n})} \notag\\
    & \quad\quad\quad+\boldsymbol{\epsilon}_{R,m}\rho_{R,m}A_{R}e^{j(\phi_{R,m} + \gamma_{R})},
\end{align}
where the signal frequency, $\omega$, is physically cancelled by its resonant transition frequency, $\omega_{ge} = \omega$. 
Given that the signals, $\{\mb{E}_{{\rm RF}, m}\}$, are embedded into Rabi frequencies, $\{\Omega_m\}$, efficient signal detection hinges on measuring $\{\Omega_m\}$ accurately. 
This is popularly accomplished by a quantum coherence process called 
\emph{electromagnetically induced transparency (EIT)} spectroscopy~\cite{RydMag_Fancher2021}. 
Briefly speaking, the electron transition process changes the rate at which Rydberg atoms absorb photons from an incoming laser~\cite{RydMag_Fancher2021}. Thereby, by steering a probe laser to penetrate through the vapor cell and observing the dynamic change of its spectrum using a PD, the Rabi frequency can be read out. We denote the measurement of $[\Omega_1, \cdots, \Omega_{N_r}]^T$ as a vector, $\mb{y} = [y_1, \cdots, y_{N_r}]^T$, followed by
\begin{align}\label{eq:y}
    \mb{y} = |\mb{H}\mb{x} + \mb{r}|.
\end{align}
Here, $\mb{H}\in\mathbb{C}^{N_r\times N_t}$ is the effective channel matrix, whose $(m,n)$-th entry is $\mb{H}_{m,n} = \sum_{\ell = 1}^L 
\frac{1}{\hbar}\boldsymbol{\mu}_{\rm RF}^T\boldsymbol{\epsilon}_{mnl}\rho_{mnl}e^{j\phi_{mnl}}$, the vector $\mb{x} = [x_1, x_2, \cdots, x_{N_t}]^T$ with $x_n = A_n e^{j\gamma_n}$ denotes the transmitted signal in the complex form, and $\mb{r} = [r_1, r_2, \cdots, r_{N_r}]^T$ with $r_m = \frac{1}{\hbar}\boldsymbol{\mu}_{\rm RF}^T\boldsymbol{\epsilon}_{R, m}\rho_{R, m}A_{R}e^{j(\phi_{R,m} + \gamma_{R})}$, denotes the received reference signal. 

Finally, by invoking the law of large numbers, the measurement noise can be modelled as the complex Gaussian distribution, $\mb{w}\sim \mathcal{CN}(0, \sigma^2 \mb{I}_{N_r})$.  As a result, we arrive at the following input-output relationship for atomic MIMO systems:
\begin{align}\label{eq:model}
    \mb{y} = |\mb{H}\mb{x} + \mb{r} + \mb{w}|, 
\end{align}
which is a non-linear magnitude-detection model. 
To analyze its achievable capacity, $C$, this article focuses on the transmitter precoding design to maximize the mutual information between $\mb{y}$ and $\mb{x}$:
\begin{align}\label{eq:MIM}
C = \max_{p(\mb{x})} \mathcal{I}(\mb{y}; \mb{x}) = \mathcal{H}(\mb{y}) - \mathcal{H}(\mb{y}|\mb{x}),
\end{align}
where $\mathcal{H}(\cdot)$ denotes the entropy of its arguments and $\mathcal{I}(\cdot;\cdot)$ the mutual information of its arguments.

    

\section{Channel Capacity With Atomic MIMO Receivers}\label{sec:3}
In this section, we introduce a \emph{strong-reference approximation} to linearize the non-linear transition model in \eqref{eq:model}. This allows the mutual information, $\mathcal{I}(\mb{y}; \mb{x})$, as well as the channel capacity, $C$, to be explicitly formulated. 



{For conventional MIMO systems, both the amplitude and phase information of received signal are accessible, resulting in a linear transition model:  $\mb{y} = \mb{H}\mb{x} + \mb{w}$. It has been proven that the channel capacity, $C'$, of this linear MIMO model is achievable if and only if $\mb{x}$ follows a \emph{circularly-symmetric} complex Gaussian (CSCG) distribution\footnote{A Gaussian random vector $\mb{x}$ is CSCG if and only if $e^{j\theta}\mb{x}$ has the same distribution with $\mb{x}$ for any real scalar $\theta$.}, i.e.,
\begin{align}\label{eq:MI0}
    C' = \max_{\mb{x}\sim \mathcal{CN}(0, \mb{Q})}  \log \det\left(
 \mb{I}_{N_r} + \frac{1}{\sigma^2} \mb{H} \mb{Q} \mb{H}^H.
 \right),
\end{align}
where $\mb{Q} = \mathsf{E}(\mb{x}\mb{x}^H)$. For an atomic MIMO receiver, however, it is evident from \eqref{eq:model} that it exhibits a non-linear input-output relationship, given that Rabi frequencies, $\{\Omega_m\}$, only capture the magnitude of incident EM waves. This fact makes it intractable to express the mutual information, $\mathcal{I}(\mb{y}; \mb{x})$, analytically, let alone calculating the channel capacity, $C$. }

To overcome the above challenge, we notice that the LO (see Fig.~\ref{img:layout}) is typically placed close to the atomic receiver \cite{RydPhase_Simons2019, Rydphase_Holloway2019,RydNP_Jing2020,chen_instantaneous_2024}. As a result, the LO-to-receiver distance (e.g., tens of centimeters) is much shorter than the transmitter-to-receiver distance (e.g., tens of meters), resulting in a much stronger reference signal than the wireless signal and noise. 
This admits a strong-reference approximation to linearize the transmission model in \eqref{eq:model} as follows.

\begin{lemma}\label{prop:sra}
    (Strong-reference approximation): 
\emph{If each entry of the reference signal is greatly stronger than the wireless signal and noise, i.e., 
$|r_m| \gg |\mb{H}_{m,:}\mb{x} + w_m|$,  we can approximate $\mb{y}$ in \eqref{eq:model} as
\begin{align}\label{eq:real}
 \boxed{\mb{y} - |\mb{r}| \approx \bar{\mb{y}}= \mathrm{Re}(\widetilde{\mb{H}}\mb{x}) + \bar{\mb{w}},}
\end{align}
where $\widetilde{\mb{H}}_{m,:} \overset{\Delta}{=} e^{-j\angle r_m}\mb{H}_{m,:}$ and $\bar{\mb{w}} \sim \mathcal{N}(0, \frac{\sigma^2}{2}\mb{I}_{N_r}) $. }
\end{lemma}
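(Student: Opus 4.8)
The plan is to reduce the vector claim to a scalar, per-antenna computation, since the modulus in \eqref{eq:model} acts element-wise. Fixing the index $m$, I would write $y_m = |r_m + s_m|$ with $s_m \bydef \mb{H}_{m,:}\mb{x} + w_m$ denoting the combined signal-and-noise contribution, and then exploit the fact that multiplying the argument of a modulus by a unit-modulus phase leaves it unchanged. Rotating by the conjugate reference phase gives $y_m = \big||r_m| + e^{-j\angle r_m}s_m\big|$, which isolates the real, dominant magnitude $|r_m|$ together with a small complex perturbation $z_m \bydef e^{-j\angle r_m}s_m$.

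Next I would split $z_m = a_m + j b_m$ into real and imaginary parts and write the modulus explicitly as $y_m = \sqrt{(|r_m| + a_m)^2 + b_m^2}$. Factoring out $|r_m|$ and invoking the strong-reference hypothesis $|r_m| \gg |s_m| \ge \max(|a_m|, |b_m|)$, the radicand takes the form $1 + u$ with $u = 2a_m/|r_m| + (a_m^2 + b_m^2)/|r_m|^2 = O(|s_m|/|r_m|)$ small. A Taylor expansion $\sqrt{1+u} = 1 + u/2 - u^2/8 + \cdots$ then yields, after the $a_m^2$ contribution from $u/2$ cancels against the $-u^2/8$ term, $y_m = |r_m| + a_m + b_m^2/(2|r_m|) + \cdots$. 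Discarding the residual $b_m^2/(2|r_m|)$ and all higher terms, each of order $O(|s_m|^2/|r_m|)$ and thus negligible relative to the retained first-order term under the hypothesis, leaves $y_m - |r_m| \approx a_m = \Re(z_m)$.

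The final step is to unpack $a_m = \Re\!\big(e^{-j\angle r_m}(\mb{H}_{m,:}\mb{x} + w_m)\big) = \Re(\widetilde{\mb{H}}_{m,:}\mb{x}) + \Re(e^{-j\angle r_m}w_m)$, identifying the deterministic part as the claimed real-part detection term with $\widetilde{\mb{H}}_{m,:} = e^{-j\angle r_m}\mb{H}_{m,:}$. For the noise, I would use that $w_m \sim \mathcal{CN}(0, \sigma^2)$ is circularly symmetric, so the rotation $e^{-j\angle r_m}w_m$ is again $\mathcal{CN}(0, \sigma^2)$, whence its real part $\bar{w}_m \bydef \Re(e^{-j\angle r_m}w_m)$ is $\mathcal{N}(0, \sigma^2/2)$; independence across $m$ is inherited from the independence of the $w_m$. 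Stacking the scalar identities over $m = 1, \dots, N_r$ then produces the vector form \eqref{eq:real} with $\bar{\mb{w}} \sim \mathcal{N}(0, \tfrac{\sigma^2}{2}\mb{I}_{N_r})$.

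The main obstacle I anticipate is not any single computation but the bookkeeping in the Taylor step: one must confirm that the $-u^2/8$ correction exactly removes the quadratic-in-$a_m$ contribution so that only the benign $b_m^2$ remainder survives, and --- importantly for the capacity analysis that follows --- that the discarded terms involving the noise (the signal--noise cross products and the noise-squared contributions appearing at second order) are genuinely higher-order infinitesimals that neither corrupt the Gaussianity nor inflate the variance of the retained linear noise term $\bar{w}_m$. Making the ``$\approx$'' precise amounts to quantifying the remainder as $O(|s_m|^2/|r_m|)$ and arguing it is dominated by the first-order term, which is exactly where the assumption $|r_m| \gg |\mb{H}_{m,:}\mb{x} + w_m|$ does the work.
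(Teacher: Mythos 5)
Your proposal is correct and follows essentially the same route as the paper's proof: reduce to a per-entry scalar computation, factor out $|r_m|$, Taylor-expand the square root, discard the higher-order remainder, and invoke circular symmetry of $w_m$ to conclude $\bar{w}_m \sim \mathcal{N}(0, \sigma^2/2)$. Your bookkeeping is in fact slightly sharper --- by splitting $e^{-j\angle r_m}s_m$ into real and imaginary parts and carrying the expansion to second order, you show the leading discarded term is $b_m^2/(2|r_m|)$ rather than the cruder $|s_m|^2/(2|r_m|)$ the paper writes down before dropping it --- but this is a refinement of the same argument, not a different one.
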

\begin{proof}
(See Appendix \ref{appendix:sra}). 
\end{proof}
Lemma \ref{prop:sra} indicates that, in the strong-reference-signal regime, the \emph{non-linear} magnitude detector in \eqref{eq:model} can be transformed into a \emph{linear} real-part detector in \eqref{eq:real} by subtracting the offset $|\mb{r}|$. 
More importantly, the complicated mutual information, $\mathcal{I}(\mb{y}; \mb{x}) = \mathcal{I}({\mb{y}} - |\mb{r}|; \mb{x})$, is converted to a tractable expression: $\mathcal{I}({\mb{y}} - |\mb{r}|; \mb{x}) \rightarrow \mathcal{I}(\bar{\mb{y}}; \mb{x})$.
    To write $\mathcal{I}(\bar{\mb{y}}; \mb{x})$ analytically, we introduce the subscripts $I$ and $Q$ to denote the real (in-phase) and imaginary (quadrature) components of a complex number, e.g., $\widetilde{\mb{H}} = \widetilde{\mb{H}}_I + j \widetilde{\mb{H}}_Q$ and $\mb{x} = \mb{x}_I + j \mb{x}_Q$. Then, the received signal $\bar{\mb{y}}$ in \eqref{eq:real} is rewritten as 
 \begin{align}\label{eq:real2}
\bar{\mb{y}} = (\widetilde{\mb{H}}_I, -\widetilde{\mb{H}}_Q)
    \left(\begin{array}{c}
      \mb{x}_I \\
      \mb{x}_Q
    \end{array}\right) + \mb{w}_I = {\bar{\mb{H}}}{\bar{\mb{x}}} + \bar{\mb{w}},
\end{align}
where $\bar{\mb{H}}\overset{\Delta}{=} (\widetilde{\mb{H}}_I, -\widetilde{\mb{H}}_Q) \in\mathbb{R}^{N_r\times 2N_t}$ and $\bar{\mb{x}} \overset{\Delta}{=} ({\mb{x}}_I^T, {\mb{x}}_Q^T)^T \in\mathbb{R}^{2N_t\times 1}$ represent the \emph{real-valued}  channel and radiated signal, respectively. Attributed to the linear behavior of real-part detectors, the maximal throughput is achieved when the radiated signal $\bar{\mb{x}}$ follows the Gaussian distribution $\bar{\mb{x}} \sim \mathcal{N}(0, \bar{\mb{Q}})$, with the covariance matrix $\bar{\mb{Q}} = \mathsf{E}(\bar{\mb{x}}\bar{\mb{x}}^T) \in \mathbb{R}^{2N_t \times 2N_t}$. Thereafter, the mutual information is expressed as 
\begin{align}\label{eq:MI}
 \mathcal{I}(\bar{\mb{y}}; \mb{x}) = \frac{1}{2} \log \det(
 \mb{I}_{N_r} + \frac{2}{\sigma^2} \bar{\mb{H}} \bar{\mb{Q}} \bar{\mb{H}}^T
 ).
\end{align}
Here, the discount factor $\frac{1}{2}$ arises because the variables $\bar{\mb{H}}$, $\bar{\mb{x}}$, and $\bar{\mb{w}}$ are all real valued. 
Moreover, considering the total power constraint imposed on the radiated signal, we have 
\begin{align}\label{eq:P}
    \mathsf{Tr}\{\mathsf{E}(\mb{x}\mb{x}^H)\} = \mathsf{Tr}\{\mathsf{E}(\bar{\mb{x}}\bar{\mb{x}}^T)\} = \mathsf{Tr}\{\bar{\mb{Q}}\} = P,
\end{align}
where $P$ is the transmission power. 
In summary, the channel capacity, $C$, of atomic MIMO systems can be formulated as 
\begin{align}\label{eq:MI3}
    \boxed{C = \max_{\mathsf{Tr}\{\bar{\mb{Q}}\} = P, \:\bar{\mb{Q}}\succeq0} \frac{1}{2} \log \det\left(
 \mb{I}_{N_r} + \frac{2}{\sigma^2} \bar{\mb{H}} \bar{\mb{Q}} \bar{\mb{H}}^T
 \right).}
\end{align}
\begin{remark} \emph{
    By comparing the channel capacities of conventional MIMO in \eqref{eq:MI0} and atomic MIMO in \eqref{eq:MI3}, one can observe that the  channel matrix is converted from a complex form, $\mb{H}$, to a real form, $\bar{\mb{H}}$. 
Moreover, the latter capacity is directly related to the radiated signals $\mb{x}_I$ and $\mb{x}_Q$ via their joint distribution, $\bar{\mb{x}}\sim\mathcal{N}(0, \bar{\mb{Q}})$, rather than their complex form, $\mb{x}\sim\mathcal{CN}(0, {\mb{Q}})$. 
These observations suggest that achieving the channel capacity of atomic MIMO requires that $\mb{x}_I$ and $\mb{x}_Q$ (or equivalently $\mb{x} = \mb{x}_I + j\mb{x}_Q$) follow a general Gaussian distribution, instead of a CSCG distribution in \eqref{eq:MI0}. This difference leads to a distinguishing feature of IQ-aware precoding for atomic MIMO.
}
\end{remark}

\section{IQ-Aware Digital Precoding}\label{sec:4}

In this section, we propose an IQ-aware digital precoding method to achieve the channel capacity, $C$, in \eqref{eq:MI3} of atomic MIMO systems.
The asymptotic channel capacity at high SNRs is subsequently analyzed to unveil the system DoF.

\subsection{Precoding Design}\label{sec:4.1}

We define $\mb{s}_I \in \mathbb{R}^{N_s}$ and $\mb{s}_Q \in \mathbb{R}^{N_s}$ as the data symbols in the in-phase and quadrature channels, respectively, where $N_s$ is the number of complex data streams.  
The overall baseband data is denoted as $\bar{\mb{s}} = [\mb{s}_I^T,\mb{s}_Q^T]^T \in\mathbb{R}^{2N_s}$ in the real form and $\mb{s} = \mb{s}_I + j\mb{s}_Q \in\mathbb{C}^{N_s}$ in the complex form.  The symbols $\mb{s}_I$ and $\mb{s}_Q$ follow i.i.d Gaussian distributions $\mathcal{N}(0, \frac{1}{2}\mb{I}_{N_s} )$, resulting in the CSCG distribution of $\mb{s}$. The objective of  precoding is to design a linear mapping $f:\bar{\mb{s}} \mapsto \bar{\mb{x}}$ such that the covariance matrix $\bar{\mb{Q}}$ can maximize the mutual information in \eqref{eq:MI3}. 

\subsubsection{Sub-optimality of traditional digital precoding}
{  To achieve the channel capacity of a MIMO system, the most intuitive method is to implement \emph{traditional complex-valued digital precoding} as it has been demonstrated to be optimal in traditional MIMO systems. 
This, however, is sub-optimal in the case of an atomic MIMO system.
Specifically, the classical precoding uses a complex matrix, $\mb{F} = \mb{F}_I + j\mb{F}_Q \in \mathbb{C}^{N_t \times N_{s}}$, to jointly process the baseband symbols $\mb{s}_I$ and $\mb{s}_Q$ before the IQ modulator, as shown in Fig.~\ref{img:tx} (a). 
After merging the IQ components, the transmit signal in the complex form is thereby the product of $\mb{F}$ and $\mb{s}$: 
\begin{align}\label{eq:x1}
    \mb{x} = \mb{F}\mb{s} = \mb{F}_I\mb{s}_I - \mb{F}_Q\mb{s}_Q + j(\mb{F}_Q\mb{s}_I + \mb{F}_I\mb{s}_Q),
\end{align}
whose covariance is $\mb{Q} = \mathsf{E}(\mb{x}\mb{x}^H) = \mb{F}\mb{F}^H$.  
For subsequent exposition, it is useful to write $\mb{x}$ in \eqref{eq:x1} as the equivalent real form:
\begin{align}\label{eq:fs1}
    \bar{\mb{x}} = \left(\begin{array}{c}
      \mb{x}_I \\
      \mb{x}_Q
    \end{array}\right) = \underbrace{\left(\begin{array}{cc}
      \mb{F}_I   & -\mb{F}_Q \\
      \mb{F}_Q   &  \mb{F}_I
    \end{array}\right)}_{\mb{F}_r}
    {\left(\begin{array}{c}
      \mb{s}_I \\
      \mb{s}_Q
    \end{array}\right)}, 
\end{align}
whose covariance matrix is $\bar{\mb{Q}} = \mathsf{E}(\bar{\mb{x}}\bar{\mb{x}}^T) = \frac{1}{2}\mb{F}_r\mb{F}_r^T$. 
The optimality of the precoder structure of $\mb{F}$ for traditional MIMO and its sub-optimality for atomic MIMO are explained as follows. 

\begin{figure*}[t!]
\centering
\includegraphics[width=6.5in]{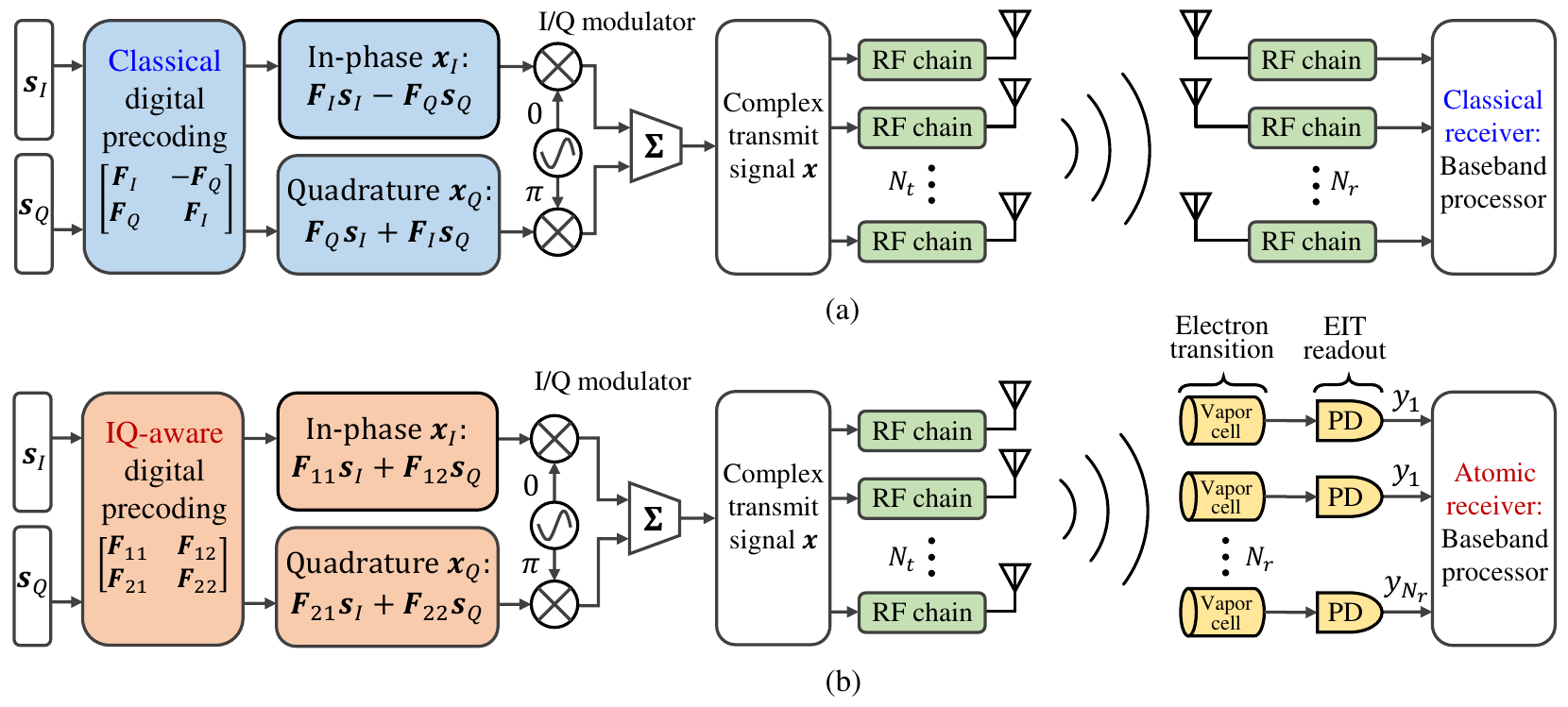}
\vspace*{-1em}
\caption{(a) Classical complex-valued digital precoding for classical receivers and (b) proposed IQ-aware digital precoding for atomic receivers. 
}
\label{img:tx}
\vspace*{-1em}
\end{figure*}

In a traditional MIMO system, the CSCG distribution is capacity achieving. Thus given CSCG distributed data symbols, $\mb{s}$, as input, a linear complex-valued precoder, $\mb{F}$, that retains the CSCG distribution can be optimal. Specifically, the optimal $\mb{F}$ is aligned with the principal singular-vectors of the channel $\mb{H}$. On the contrary, the CSCG precoded symbols cannot achieve the channel capacity of an atomic MIMO system, making the precoder $\mb{F}$ sub-optimal. To see the sub-optimality, one can observe from \eqref{eq:fs1} that the real-equivalence of $\mb{F}$, $\mb{F}_r$,  has a \emph{block-coupled constraint}, i.e., its diagonal block components are identical while its off-diagonal block components are opposite. Obviously, introducing this structural constraint reduces the channel capacity of atomic MIMO in \eqref{eq:MI3}. To elaborate, $\mb{F}_r$ under the constraint cannot always align with the principal singular-vectors of the real channel $\bar{\mb{H}}$ to achieve its capacity. For example, we consider a simple scenario: $N_r = N_t = 2$ and $N_s = 1$. The $2\times 2$ complex channel matrix is $\mb{H} = [2, j; j, 1]$ and the reference signal is $\mb{r} = [1; 1]$. In this context, the $2\times 4$ real channel matrix is $\bar{\mb{H}} = [\mb{H}_I, -\mb{H}_Q] = [2,0,0,-1; 0,1,-1,0]$. The principal singular vectors of $\bar{\mb{H}}$ are calculated as 
$\bar{\mb{v}}_1 = [-0.8944, 0, 0, 0.4472]^T$ and $\bar{\mb{v}}_2 = [0, -0.7071, 0.7071, 0]^T$. Evidently, the singular vectors $\bar{\mb{v}}_1$ and $\bar{\mb{v}}_2$ cannot form a precoding matrix satisfying the block-coupled constraint of $\mb{F}_r$. In this context, precoder $\mb{F}_r$ and its complex-equivalence $\mb{F}$ is not optimal.

\subsubsection{Proposed IQ-aware digital precoding}
The sub-optimality of conventional precoding motivates the proposed \emph{IQ-aware digital precoding} in Fig.~\ref{img:tx} (b), which relaxes the block-coupled constraint and thereby is capable of generating general-Gaussian symbols despite CSCG input. 
Specifically, compared with the joint manipulation of baseband IQ symbols using the complex precoder $\mb{F}$, the IQ-aware precoding employs four real-valued matrices, $\mb{F}_{k\ell}\in\mathbb{R}^{N_t\times N_s}, k\in\{1,2\},\ell\in\{1,2\}$, to independently process $\mb{s}_I$ and $\mb{s}_Q$.  By this means, the signals entering the IQ modulator become $\mb{F}_{11}\mb{s}_I + \mb{F}_{12}\mb{s}_Q$ and $\mb{F}_{21}\mb{s}_I + \mb{F}_{22}\mb{s}_Q$. The output signal in complex form is thereby
\begin{align} \label{eq:x2}
    \mb{x} = \mb{F}_{11}\mb{s}_I + \mb{F}_{12}\mb{s}_Q + j(\mb{F}_{21}\mb{s}_I + \mb{F}_{22}\mb{s}_Q).
\end{align}
One can verify that the equivalent real-form signal is 
\begin{align}\label{eq:fs2}
    \bar{\mb{x}} = \left(\begin{array}{c}
      \mb{x}_I \\
      \mb{x}_Q
    \end{array}\right) = \underbrace{\left(\begin{array}{cc}
      \mb{F}_{11}   &  \mb{F}_{12} \\
      \mb{F}_{21}   &  \mb{F}_{22}
    \end{array}\right)}_{\bar{\mb{F}}}
    {\left(\begin{array}{c}
      \mb{s}_I \\
      \mb{s}_Q
    \end{array}\right)}, 
\end{align}
whose covariance matrix is $\bar{\mb{Q}} = \frac{1}{2}\bar{\mb{F}}\bar{\mb{F}}^T$. 
The matrix $\bar{\mb{F}} \in \mathbb{R}^{2N_t \times 2N_s}$ represents the ``IQ-aware digital precoder". By comparing  \eqref{eq:x2} and \eqref{eq:fs2} with their counterparts in \eqref{eq:x1} and \eqref{eq:fs1}, two important features of IQ-aware digital precoding can be observed. First, the complex signal $\mb{x}$ in \eqref{eq:x2} is no longer the product of a complex precoder and a complex symbol as in \eqref{eq:x1}, making it possible for $\mb{x}$ to be general Gaussian distributed. Second, the equivalent real-valued precoder $\bar{\mb{F}}$ in \eqref{eq:fs2}, now is \emph{block-decoupled} with independent block components, $\{\mb{F}_{11}, \mb{F}_{12}, \mb{F}_{21}, \mb{F}_{22}\}$, which relaxes the block-coupled constraint of $\mb{F}_r$ in \eqref{eq:fs1}. 
These two features give the proposed IQ-aware digital precoding more flexibility in manipulating the transmitted signal $\mb{x}$, allowing for the utilization of SVD-based precoding technique to achieve the channel capacity. }

To be specific, we let $\bar{\mb{V}} \in \mathbb{R}^{2N_t\times 2N_t}$ be
the right-singular matrix of $\bar{\mb{H}}$, 
and $\{\bar{\lambda}_1,\cdots, \bar{\lambda}_{\min(N_r, 2N_t)}\}$ the corresponding singular values in a descending order.
The optimal IQ-aware digital precoder $\bar{\mb{F}}$ is represented as
\begin{align}\label{eq:Fopt}
    \bar{\mb{F}} =\left(\begin{array}{cc}
      \mb{F}_{11}   &  \mb{F}_{12} \\
      \mb{F}_{21}   &  \mb{F}_{22}
    \end{array}\right) = \bar{\mb{V}}_{:, 1:2N_s}\bar{\mb{P}}^\frac{1}{2}.
\end{align}
The power allocation matrix $\bar{\mb{P}} \overset{\Delta}{=} \diag\{\bar{p}_1,\cdots, \bar{p}_{2N_s}\}$ is determined by the water-filling principle:
$\bar{p}_k = \left(\mu - \frac{\sigma^2}{\bar{\lambda}_k^2}\right)^{+}$, 
where the water level $\mu$ is adjusted to satisfy the power constraint $\mathsf{Tr}\{\bar{\mb{Q}}\} = \frac{1}{2}\mathsf{Tr}\{\bar{\mb{P}}\} = \frac{1}{2}\sum_{k= 1}^{2N_s} \bar{p}_k = P$. Consequently, by processing baseband data $\mb{s}_I$ and $\mb{s}_Q$ using $\{\mb{F}_{11}, \mb{F}_{12}, \mb{F}_{21}, \mb{F}_{22}\}$, the channel capacity in \eqref{eq:MI3} is achieved as:  
\begin{align}\label{eq:cap}
    C = \sum_{k = 1}^{2N_s} \frac{1}{2} \log\left(1 + \frac{\bar{\lambda}_k^2 \bar{p}_k}{\sigma^2}\right),
\end{align}
where the number of real-data streams, $2N_s$, should not exceed the rank of the real-channel matrix, denoted as $\min(N_r, 2N_t)$. 

\subsection{Channel Capacity Analysis}\label{sec:4.3}
To gain more insights into atomic MIMO receivers, we analyze the asymptotic behavior of the channel capacity, $C$, at high SNRs
with the consideration of the maximal number of multiplexed data streams, i.e., $2N_s = \min(N_r, 2N_t)$.

\begin{lemma}\label{lemma:capacity} 
\emph{Define $\text{SNR} = \frac{P}{\sigma^2}$. When $\text{SNR}\rightarrow +\infty$ and $2N_s = \min(N_r, 2N_t)$, the channel capacity is asymptotically
        \begin{align}\label{eq:capacity}
        \boxed{C =  \min\left(\frac{N_r}{2}, N_t\right)\log\frac{\mathrm{SNR}}{ \min\left(\frac{N_r}{2}, N_t\right)} + \mathcal{O}(1).}
    \end{align}}
\end{lemma}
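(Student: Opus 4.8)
The plan is to start from the exact water-filling capacity in \eqref{eq:cap} and extract its leading-order behaviour as $\mathrm{SNR}\to\infty$. Write $M \overset{\Delta}{=} 2N_s = \min(N_r,2N_t)$, so that the target coefficient is $M/2 = \min(N_r/2,N_t)$, and recall that $\bar{\lambda}_1,\dots,\bar{\lambda}_M$ are the nonzero singular values of $\bar{\mb{H}}$. The first step is to verify that, for all sufficiently large $P$, the water-filling solution activates every one of the $M$ modes. Since $\bar{\lambda}_k>0$ for $k\le M$ and the water level $\mu$ grows without bound as $P\to\infty$, the condition $\mu>\sigma^2/\bar{\lambda}_k^2$ eventually holds for each $k$, so $\bar{p}_k = \mu - \sigma^2/\bar{\lambda}_k^2>0$ for all $k\le M$.

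The second step uses an algebraic identity that makes the high-SNR expansion exact rather than merely asymptotic. Substituting $\bar{p}_k = \mu - \sigma^2/\bar{\lambda}_k^2$ into \eqref{eq:cap} gives $1 + \bar{\lambda}_k^2\bar{p}_k/\sigma^2 = \bar{\lambda}_k^2\mu/\sigma^2$, whence
\begin{align}\label{eq:plan_exact}
    C = \frac{1}{2}\sum_{k=1}^M \log\frac{\bar{\lambda}_k^2\mu}{\sigma^2} = \frac{M}{2}\log\frac{\mu}{\sigma^2} + \frac{1}{2}\sum_{k=1}^M\log\bar{\lambda}_k^2.
\end{align}
The second term depends only on the channel and not on $P$ or $\sigma^2$, hence it is $\mathcal{O}(1)$ and can be set aside.

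The third step pins down the water level. The power constraint $\frac{1}{2}\sum_{k=1}^M \bar{p}_k = P$ yields $\mu = \frac{2P}{M} + \frac{\sigma^2}{M}\sum_{k=1}^M \bar{\lambda}_k^{-2}$, so that $\mu/\sigma^2 = \frac{2\,\mathrm{SNR}}{M}\bigl(1 + O(\mathrm{SNR}^{-1})\bigr)$ and therefore $\log(\mu/\sigma^2) = \log\frac{2\,\mathrm{SNR}}{M} + O(\mathrm{SNR}^{-1})$. Inserting this into \eqref{eq:plan_exact} and using $\frac{2\,\mathrm{SNR}}{M} = \mathrm{SNR}/(M/2)$ collects the leading term into $\frac{M}{2}\log\frac{\mathrm{SNR}}{M/2} = \min(N_r/2,N_t)\log\frac{\mathrm{SNR}}{\min(N_r/2,N_t)}$, with the channel-dependent constant and the $O(\mathrm{SNR}^{-1})$ correction both absorbed into $\mathcal{O}(1)$, which is exactly \eqref{eq:capacity}.

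I expect the only genuine obstacle to be the activation argument of the first step: one must confirm that the ``fill all $M$ modes'' regime has been reached before the clean identity $1+\bar{\lambda}_k^2\bar{p}_k/\sigma^2 = \bar{\lambda}_k^2\mu/\sigma^2$ may be applied for every $k\le M$. Equivalently, the asymptotics are valid once $\mathrm{SNR}$ exceeds a channel-dependent threshold governed by the spread of the $\bar{\lambda}_k^{-2}$; below that threshold some modes would carry no power and the coefficient of $\log\mathrm{SNR}$ would be strictly smaller than $M/2$. Once past the threshold the remaining manipulations are elementary, since the correction to $\mu$ enters only logarithmically and contributes an $O(\mathrm{SNR}^{-1})$ term.
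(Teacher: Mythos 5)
Your proof is correct and takes essentially the same route as the paper: both start from the water-filling capacity \eqref{eq:cap} and extract its high-SNR behaviour, where the paper's one-sentence proof simply asserts $\bar{p}_k \rightarrow P/N_s$ and $\log(1+x)\rightarrow\log x$. Your version is in fact more careful than the paper's, since you explicitly verify that all $M=\min(N_r,2N_t)$ modes are activated beyond a channel-dependent SNR threshold and track the water level $\mu$ exactly before absorbing the channel-dependent terms into $\mathcal{O}(1)$ --- details the paper leaves implicit.
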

\begin{proof}
    Equation \eqref{eq:capacity} holds because $p_k \rightarrow \frac{P}{N_s}$ when $\text{SNR}\rightarrow +\infty$ and $\log(1 + x)\rightarrow \log(x)$ when $x\rightarrow +\infty$. 
\end{proof}
Lemma \ref{lemma:capacity} suggests that the capacity of an atomic MIMO system is improved by $\min(N_r/2, N_t)$ bit/s/Hz for every 3 dB increment in SNR, which corresponds to the DoFs of $\lim_{\text{SNR} \rightarrow \infty}\frac{C_e}{\log \text{SNR}} = \min\left(\frac{N_r}{2}, N_t\right)$~\cite{FundWC_Tse2015}. 
This result reflects the fact that half DoFs are lost at the receiver side due to the absence of imaginary-part information in \eqref{eq:real}, while the DoFs remain intact at the transmitter side. Thereafter, a discount factor $\frac{1}{2}$ is imposed on $N_r$, giving the DoFs of $\min(\frac{N_r}{2}, N_t)$. 
\begin{figure}[t!]
\centering
\includegraphics[width=3.5in]{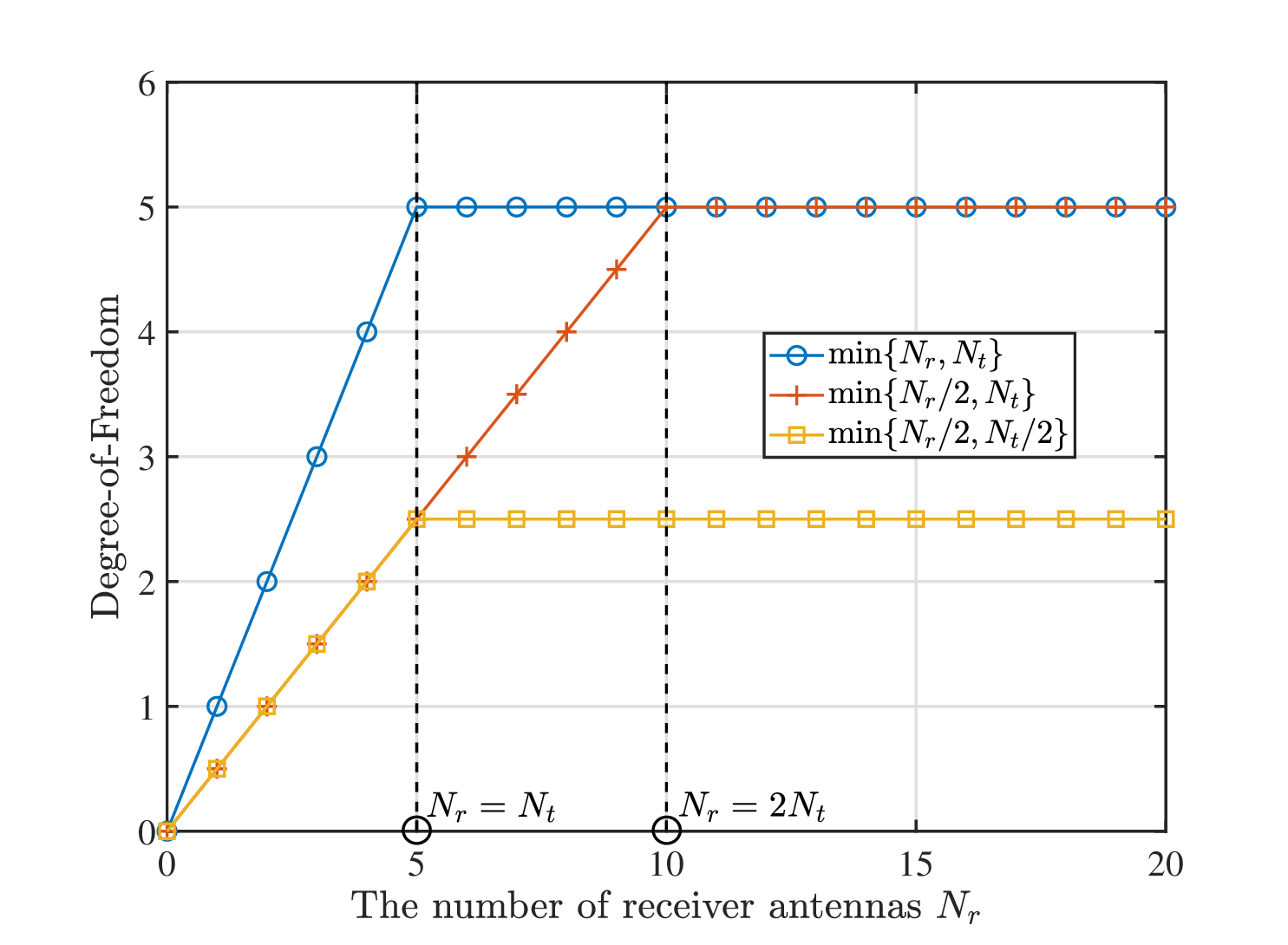}
\vspace*{-1em}
\caption{Comparison of DoFs between three cases: 1) $\min(N_r, N_t)$ for traditional MIMO, 2) $\min(\frac{N_r}{2}, N_t)$ for atomic MIMO, and 3)  
$\frac{1}{2}\min(N_r, N_t)$ for traditional MIMO with real symbols. The number of transimt antennas is set as $N_t = 5$.}
\label{img:dof}
\vspace*{-1em}
\end{figure}

For a better understanding, we compare three DoFs, $\min(N_r, N_t)$, $\min(\frac{N_r}{2}, N_t)$, and $\frac{1}{2}\min(N_r, N_t)$, as a function of $N_r$ in Fig.~\ref{img:dof}. 
They correspond to the following three cases, respectively: 1)
the traditional MIMO system, 2) the atomic MIMO system, 3) the In-phase transmission, where only \emph{real-valued symbols}, such as $\mb{s}_I$, are transmitted through a traditional $N_r\times N_t$ MIMO channel.
As $N_r$ increases, the DoFs $\min(\frac{N_r}{2}, N_t)$ gradually climb from $\frac{1}{2}\min(N_r, N_t)$ to $\min(N_r, N_t)$. Specifically, when $N_r < N_t$, we have $\min\left(\frac{N_r}{2}, N_t\right) = \frac{1}{2} \min\left({N_r}, N_t\right) = \frac{1}{2}N_r$. The DoFs of atomic MIMO systems are exactly half of that of conventional systems. In this context, it is sufficient to achieve the capacity of atomic receivers by transmitting $\frac{N_r}{2}$ complex-valued symbols, or equivalently $N_r$ real-valued symbols, such as $\mb{s}_I$. 
Moreover, as $N_r$ grows from $N_t$ to $2N_t$, the DoFs, $\min(N_r, N_t)$,  saturate while the gap between  $\min(N_r, N_t)$ and  $\min(N_r/2, N_t)$ narrows. 
Particularly, when $N_r \ge 2N_t$, we have $\min(\frac{N_r}{2}, N_t) = \min(N_r, N_t) = N_t$. This relationship indicates that an atomic receiver does not suffer any DoF loss compared to its traditional counterpart if the number of atomic antennas is at least twice the number of transmit antennas. This result is intuitive because $N_r$ real-valued observations are sufficient to recover $N_t \le \frac{N_r}{2}$ complex-valued symbols. Consequently, under a same high SNR, the capacity gap between atomic and traditional MIMO systems tends to a constant independent of SNR. In summary, to fully exploit the multiplexing gain of an atomic MIMO system, it is necessary to have $N_r \ge 2 N_t$.

\section{IQ-Aware Hybrid Precoding}\label{sec:5}
For massive MIMO systems employing large-scale antenna arrays, hybrid precoding that incorporates digital and analog precoders is a more practical solution than fully digital precoding\cite{MIMO_Cui2022, OMP, HP_Yu2016, yu_alternating_2016}. 
Its essential idea is to cascade a low-dimensional digital processor (i.e., digital precoder) and a high-dimensional analog PS network (i.e., analog precoder), so as to reduce the number of costly radio-frequency (RF) chains. In this section, we extend the fully digital IQ-aware precoding in the preceding section to realize IQ-aware hybrid precoding. 
Compared with traditional designs, a new challenge arises from the fact that the need of independent manipulation of IQ symbols cannot be implemented using analog circuits (i.e., a PS network). 

{  To tackle this challenge, the proposed IQ-aware hybrid precoding architecture deploys a low-dimensional IQ-aware digital precoder to assist the analog PS network. As illustrated in Fig.~\ref{img:HP},  both the FC and SC phase-shifter networks are considered. Compared with the IQ-aware fully digital precoder in \eqref{eq:fs2} where the entire precoding matrix is block-decoupled, the IQ-aware hybrid precoder is constructed by the product of a block-decoupled digital precoder and a block-coupled analog precoder. 
We will show that, despite the block-coupled constraint imposed on the analog part, the IQ-aware hybrid precoding is able to closely approximate the fully digital counterpart. }





\subsection{IQ-Aware Hybrid Precoding with a FC-PS Network}\label{sec:5.1}
\subsubsection{Problem statement}
We first consider the IQ-aware hybrid precoding with a FC-PS network, where each RF chain is connected to all BS antennas via PSs. Denote the number of RF chains as $N_{\rm RF}$, satisfying $N_s \le N_{\rm RF} \ll N_t$. In  the baseband processor, an IQ-aware digital precoder $\bar{\mb{D}}$ of dimension $2N_{\rm RF} \times 2N_s$ is employed, with block components $\mb{D}_{k\ell}\in\mathbb{R}^{N_{\rm RF}\times N_s}, k\in\{1,2\},\ell\in\{1,2\}$, to precode the symbols $\mb{s}_I$ and $\mb{s}_Q$. The complex signal vector coming out of RF chains is thus $\mb{s}_{\rm rf} = \mb{D}_{11}\mb{s}_I + \mb{D}_{12}\mb{s}_Q + j(\mb{D}_{21}\mb{s}_I + \mb{D}_{22}\mb{s}_Q)$. At the RF front-end, the analog precoder is expressed by a complex matrix $\mb{A} = \mb{A}_I + j\mb{A}_Q\in\mathbb{C}^{N_t \times N_{\rm RF}}$. Given that PSs can only adjust the phases of signals, each element of $\mb{A}$ must satisfy the unit modulus constraint: $|\mb{A}_{m,n}|^2 = \mb{A}_{I,m,n}^2 + \mb{A}_{Q,m,n}^2 = 1,\forall m,n$.  Thereafter, the transmitted signal $\mb{x}$ is given as $\mb{x} = \mb{A}\mb{s}_{\rm rf}$, whose real form is 
\begin{align}\label{eq:xADs}
    \bar{\mb{x}} = \underbrace{\left(\begin{array}{cc}
      \mb{A}_I & -\mb{A}_Q\\
      \mb{A}_Q & \mb{A}_I
    \end{array}\right)}_{\bar{\mb{A}}} \underbrace{\left(\begin{array}{cc}
      \mb{D}_{11} & \mb{D}_{12}\\
      \mb{D}_{21} & \mb{D}_{22}
    \end{array}\right)}_{\bar{\mb{D}}} \underbrace{\left(\begin{array}{c}
      \mb{s}_I \\
      \mb{s}_Q 
    \end{array}\right)}_{\bar{\mb{s}}}.
\end{align} 
The covariance matrix of $\bar{\mb{x}}$ is $\bar{\mb{Q}} = \frac{1}{2}\bar{\mb{A}}\bar{\mb{D}}\bar{\mb{D}}^T\bar{\mb{A}}^T$. {  Comparing the precoding models \eqref{eq:fs2} and \eqref{eq:xADs}, one can observe that the analog part $\bar{\mb{A}}$ is subject to the block-coupled constraint, since analog PSs cannot distinguish the combined IQ signals. Fortunately, the hybrid precoder, $\bar{\mb{A}}\bar{\mb{D}}$, still has the ability to produce general Gaussian distributed signals $\bar{\mb{x}}$ because the digital part, $\bar{\mb{D}}$, is IQ-aware and has independent block components. This ability allows the accurate approximation of the IQ-aware fully digital precoder, $\bar{\mb{F}}$, by the hybrid counterpart, $\bar{\mb{A}}\bar{\mb{D}}$. 
To realize this objective, the joint optimization of the matrices $\bar{\mb{A}}$ and $\bar{\mb{D}}$ are formulated as
\begin{align}
   \min_{\bar{\mb{A}}, \bar{\mb{D}}}\:\: &\| \bar{\mb{F}} - \bar{\mb{A}}\bar{\mb{D}}\|_F^2, \label{eq:FCP1}\\
    \mathrm{s.t.}\:\:&\mb{A}_{I,m,n}^2 + \mb{A}_{Q,m,n}^2 =  1, \:\forall m,n, \tag{\ref{eq:FCP1}a} \label{eq:FCP1a} \\
    &\: \frac{1}{2} \mathsf{Tr}\{\bar{\mb{A}}\bar{\mb{D}}\bar{\mb{D}}^T\bar{\mb{A}}^T\} = P, \tag{\ref{eq:FCP1}b}\label{eq:FCP1b}
\end{align}
wherein \eqref{eq:FCP1b} is the total power constraint. In the subsequent discussions, we will devise an alternating minimization algorithm to solve $\bar{\mb{A}}\bar{\mb{D}}$ from problem \eqref{eq:FCP1} and use numerical simulations to show that the achieved rate of $\bar{\mb{A}}\bar{\mb{D}}$ can tightly approach the channel capacity in \eqref{eq:cap}.}
\begin{figure}[t!]
\centering
\includegraphics[width=3.5in]{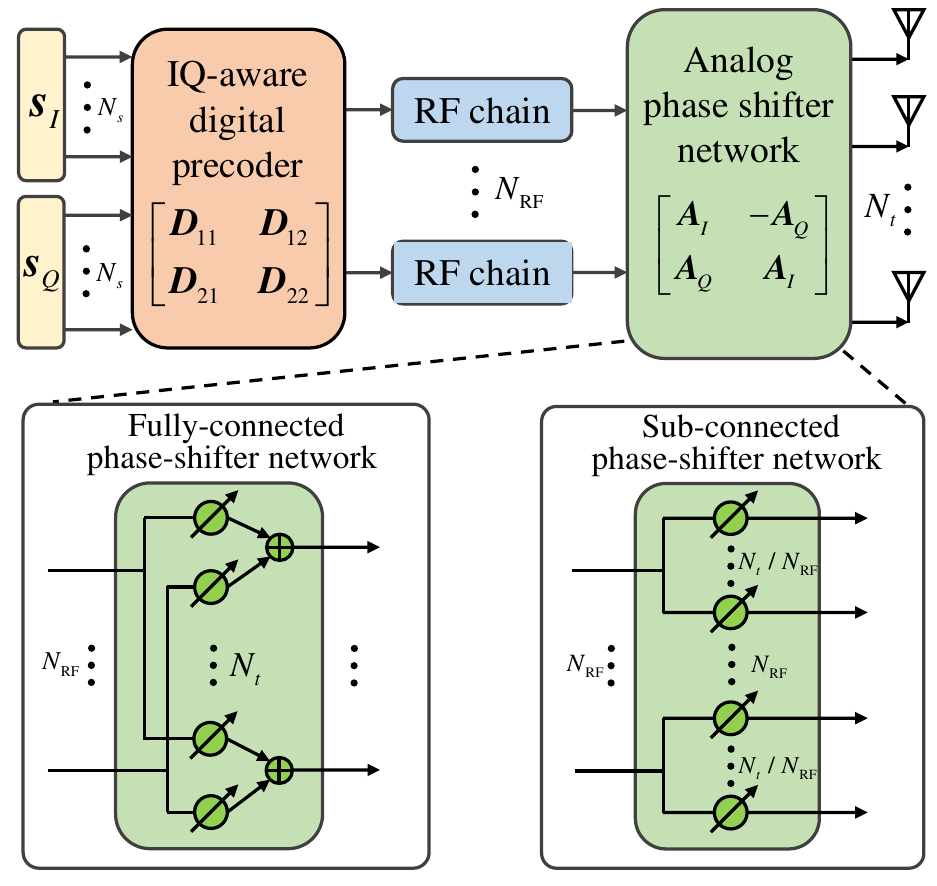}
\vspace*{-1em}
\caption{IQ-aware hybrid precoding with FC- and SC-PS networks.
}
\label{img:HP}
\vspace*{-1em}
\end{figure}

\subsubsection{Problem transformation}\label{sec:5.1.2}
Owing to the non-convex modulus constraint in \eqref{eq:FCP1a}, the block-coupled constraint of $\bar{\mb{A}}$, as well as the product of $\bar{\mb{A}}$ and $\bar{\mb{D}}$, directly solving problem \eqref{eq:FCP1} is challenging. To overcome these challenges, the first step is to seek an equivalent expression of \eqref{eq:FCP1} that decouples the analog precoder $\bar{\mb{A}}$ from its multiplication with $\bar{\mb{D}}$, which can significantly simplify the precoding design. 

Note that the columns of the optimal precoder $\bar{\mb{F}}$ are mutually orthogonal to mitigate the interference between data streams. 
The designed hybrid precoder $\bar{\mb{A}}\bar{\mb{D}}$ is thus expected to maintain column orthogonality as well. 
In massive MIMO systems, $N_t\gg N_{RF}$, the relationship $\bar{\mb{A}}^T\bar{\mb{A}} \approx N_t\mb{I}_{2N_{\rm RF}}$ holds with high probability~\cite{OMP,HP_Yu2016}. 
This is because the diagonal elements of $\bar{\mb{A}}^T\bar{\mb{A}}$ are exactly $N_t$ while the off-diagonal elements can be approximated as the summation of $2N_t$ independent terms, which is much less than $N_t$ with high probability for $N_t \gg 1$. These properties imply that the columns of the optimal $\bar{\mb{D}}$ are approximately orthogonal as well. Motivated by this observation, we introduce a column-orthogonal constraint on $\bar{\mb{D}}$ and thereby replace it by $\gamma\bar{\mb{D}}_u$, where $\bar{\mb{D}}^T_u\bar{\mb{D}}_u = \mb{I}_{2N_s}$. The scalar $\gamma$ is introduced to meet the total power constraint \eqref{eq:FCP1b}, whose value is approximately $\gamma \approx \sqrt{\frac{2P}{N_t}}$ by considering $\bar{\mb{A}}^T\bar{\mb{A}} \approx N_t\mb{I}_{2N_{\rm RF}}$. 
Consequently, the hybrid precoding design is recast as 
\begin{align}
   \min_{\bar{\mb{A}}, \bar{\mb{D}}_u}\:\: &\| \bar{\mb{F}} - \gamma \bar{\mb{A}}\bar{\mb{D}}_u\|_F^2, \label{eq:FCP1.1}\\
   \mathrm{s.t.}\:\:&\mb{A}_{I,m,n}^2 + \mb{A}_{Q,m,n}^2 =  1, \:\forall m,n, \tag{\ref{eq:FCP1.1}a} \label{eq:FCP1.1a} \\
    &\: \bar{\mb{D}}^T_u\bar{\mb{D}}_u = \mb{I}_{2N_{s}}.  \tag{\ref{eq:FCP1.1}b}\label{eq:FCP1.1b}
\end{align}

Additionally, it is notable that although the columns of $\bar{\mb{D}}_u \in \mathbb{R}^{2 N_{\rm RF} \times 2N_s}$ are orthogonal, its rows are typically nonorthogonal given that  $N_{\rm RF}\ge N_s$, making it intractable to decouple the product of $\bar{\mb{A}}$ and $\bar{\mb{D}}_u$. To address this, we introduce two auxiliary matrices $\bar{\mb{D}}_c\in\mathbb{R}^{2N_{\rm RF} \times (2N_{\rm RF} - 2N_s)}$ and $\bar{\mb{F}}_c\in\mathbb{R}^{2N_{\rm t} \times (2N_{\rm RF} - 2N_s)}$, which complement $\bar{\mb{D}}_u$ to form a square matrix $\widetilde{\mb{D}} = (\bar{\mb{D}}_u, \bar{\mb{D}}_c) \in \mathbb{R}^{2N_{\rm RF}\times 2N_{\rm RF}}$ and complement $\bar{\mb{F}}$ to form $\widetilde{\mb{F}} = (\bar{\mb{F}}, \bar{\mb{F}}_c) \in \mathbb{R}^{2N_{\rm t}\times 2N_{\rm RF}}$, respectively. Then, the following lemma provides an equivalent expression for the objective function.
\begin{lemma} \label{lemma1}
    Given matrices $\bar{\mb{F}} \in \mathbb{R}^{2N_t \times 2N_{s}}$, $\bar{\mb{A}}\in\mathbb{R}^{2N_{t}\times2N_{\rm RF}}$, and $\bar{\mb{D}}_u \in \mathbb{R}^{2N_{\rm RF} \times 2N_{s}}$ satisfying  $\bar{\mb{D}}_u^T\bar{\mb{D}}_u = \mb{I}_{2N_s}$ and $N_{RF}\ge N_{s}$, the following optimization problems are identical:
    \begin{align}\label{eq:UOP}
     \min_{\bar{\mb{A}}, \bar{\mb{D}}_u}\| \bar{\mb{F}} - \gamma \bar{\mb{A}}\bar{\mb{D}}_u\|_F^2 = \min_{\bar{\mb{A}}, \widetilde{\mb{D}}, \bar{\mb{F}}_c}\| \widetilde{\mb{F}} - \gamma \bar{\mb{A}}\widetilde{\mb{D}}\|_F^2,
    \end{align}
    where $\widetilde{\mb{F}} = (\bar{\mb{F}}, \bar{\mb{F}}_c) \in \mathbb{R}^{2N_{\rm t}\times 2N_{\rm RF}}$, $\widetilde{\mb{D}} = (\bar{\mb{D}}_u, \bar{\mb{D}}_c) \in \mathbb{R}^{2N_{\rm RF}\times 2N_{\rm RF}}$, and $\widetilde{\mb{D}}^T\widetilde{\mb{D}} = \widetilde{\mb{D}}\widetilde{\mb{D}}^T = \mb{I}_{2N_{\rm RF}}$. 
\end{lemma}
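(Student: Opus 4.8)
The plan is to prove that the two minimizations coincide by a two-sided inequality, exploiting the column-wise additivity of the squared Frobenius norm together with the freedom in the appended blocks $\bar{\mb{F}}_c$ and $\bar{\mb{D}}_c$. First I would partition the augmented objective along the column blocks. Writing $\widetilde{\mb{F}} = (\bar{\mb{F}}, \bar{\mb{F}}_c)$ and $\widetilde{\mb{D}} = (\bar{\mb{D}}_u, \bar{\mb{D}}_c)$ gives $\gamma \bar{\mb{A}}\widetilde{\mb{D}} = (\gamma \bar{\mb{A}}\bar{\mb{D}}_u,\, \gamma \bar{\mb{A}}\bar{\mb{D}}_c)$, so that
\begin{align}
\| \widetilde{\mb{F}} - \gamma \bar{\mb{A}}\widetilde{\mb{D}}\|_F^2 = \| \bar{\mb{F}} - \gamma \bar{\mb{A}}\bar{\mb{D}}_u\|_F^2 + \| \bar{\mb{F}}_c - \gamma \bar{\mb{A}}\bar{\mb{D}}_c\|_F^2. \nonumber
\end{align}
This identity isolates the quantity appearing on the left-hand side of the lemma as the first summand and makes the whole argument essentially bookkeeping.

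For the inequality $\min_{\rm RHS} \le \min_{\rm LHS}$, I would take any feasible pair $(\bar{\mb{A}}, \bar{\mb{D}}_u)$ of the left problem. Since $\bar{\mb{D}}_u$ has orthonormal columns and $2N_s \le 2N_{\rm RF}$ (guaranteed by $N_{\rm RF} \ge N_s$), it can be completed to a square orthogonal matrix $\widetilde{\mb{D}} = (\bar{\mb{D}}_u, \bar{\mb{D}}_c)$ by choosing $\bar{\mb{D}}_c$ to be any orthonormal basis of the orthogonal complement of the column space of $\bar{\mb{D}}_u$. Then, because $\bar{\mb{F}}_c$ is a free variable in the right problem, I simply set $\bar{\mb{F}}_c = \gamma \bar{\mb{A}}\bar{\mb{D}}_c$, which annihilates the second summand. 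The augmented objective then equals $\| \bar{\mb{F}} - \gamma \bar{\mb{A}}\bar{\mb{D}}_u\|_F^2$, so the right-hand minimum is no larger than the left-hand one.

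For the reverse inequality $\min_{\rm RHS} \ge \min_{\rm LHS}$, I would take any feasible triple $(\bar{\mb{A}}, \widetilde{\mb{D}}, \bar{\mb{F}}_c)$ of the right problem. Its leading block $\bar{\mb{D}}_u$, i.e.\ the first $2N_s$ columns of the orthogonal matrix $\widetilde{\mb{D}}$, automatically satisfies $\bar{\mb{D}}_u^T\bar{\mb{D}}_u = \mb{I}_{2N_s}$, so that $(\bar{\mb{A}}, \bar{\mb{D}}_u)$ is feasible for the left problem. Dropping the nonnegative second summand in the displayed identity gives $\| \widetilde{\mb{F}} - \gamma \bar{\mb{A}}\widetilde{\mb{D}}\|_F^2 \ge \| \bar{\mb{F}} - \gamma \bar{\mb{A}}\bar{\mb{D}}_u\|_F^2 \ge \min_{\rm LHS}$; since this holds for every right-feasible triple, taking the infimum over the right-hand feasible set yields $\min_{\rm RHS} \ge \min_{\rm LHS}$. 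Combining the two inequalities establishes the equality.

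I do not expect a genuine obstacle here; the only points requiring care are verifying that the completion of $\bar{\mb{D}}_u$ to a square orthogonal matrix is always possible, which reduces to the dimension count $N_{\rm RF} \ge N_s$ stated in the hypothesis, and that $\bar{\mb{F}}_c$ and $\bar{\mb{D}}_c$ are genuinely free (unconstrained apart from $\widetilde{\mb{D}}$ being orthogonal) in the right-hand problem. The payoff of this reformulation, which I would emphasize immediately after the proof, is that with $\widetilde{\mb{D}}$ now square and orthogonal one has $\| \widetilde{\mb{F}} - \gamma \bar{\mb{A}}\widetilde{\mb{D}}\|_F^2 = \| \widetilde{\mb{F}}\widetilde{\mb{D}}^T - \gamma \bar{\mb{A}}\|_F^2$, which decouples $\bar{\mb{A}}$ from its product with $\widetilde{\mb{D}}$ and thereby enables the subsequent alternating-minimization steps.
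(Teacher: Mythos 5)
Your proof is correct and follows essentially the same route as the paper's: the column-block additivity of the squared Frobenius norm together with the substitution $\bar{\mb{F}}_c = \gamma\bar{\mb{A}}\bar{\mb{D}}_c$ to annihilate the appended block. Your two-sided-inequality writeup is a slightly more careful rendering of the paper's nested-minimization chain, as it makes explicit both the orthogonal completion of $\bar{\mb{D}}_u$ (valid since $N_{\rm RF}\ge N_s$) and the legitimacy of decoupling the minimizations, which the paper glosses over.
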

\begin{proof}
    (See Appendix C). 
\end{proof}
Utilizing the auxiliary matrices, Lemma~\ref{lemma1} converts the optimization of the column-orthogonal matrix $\bar{\mb{D}}_u$ into optimizing the unitary matrix $\widetilde{\mb{D}}$. Since a unitary transformation preserves the power of its input, we have
\begin{align}\label{eq:unitary}
     \| \widetilde{\mb{F}} - \gamma \bar{\mb{A}}\widetilde{\mb{D}}\|_F^2 = \| \widetilde{\mb{F}}\widetilde{\mb{D}}^T - \gamma \bar{\mb{A}}\|_F^2.
    \end{align}
Equation \eqref{eq:unitary} gets the analog precoder $\bar{\mb{A}}$ rid of the product form with the digital precoder $\widetilde{\mb{D}}$. This simplification significantly facilitates our precoding algorithm design.     
By applying Lemma~\ref{lemma1} and \eqref{eq:unitary}, problem \eqref{eq:FCP1.1} is transformed into
\begin{align}
   \min_{\bar{\mb{A}},\widetilde{{\mb{D}}},\bar{\mb{F}}_c}
   \:\: &\| \widetilde{\mb{F}}\widetilde{\mb{D}}^T -  \gamma\bar{\mb{A}}\|_F^2 \label{eq:FCP2}\\
    \mathrm{s.t.}\:\:&\mb{A}_{I,m,n}^2 + \mb{A}_{Q,m,n}^2 =  1, \:\forall m,n, \tag{\ref{eq:FCP2}a} \label{eq:FCP2a} \\
    &\: \widetilde{\mb{D}}^T\widetilde{\mb{D}} = \widetilde{\mb{D}}\widetilde{\mb{D}}^T = \mb{I}_{2N_{\rm RF}}. \tag{\ref{eq:FCP2}b}\label{eq:FCP2b}
\end{align}
To efficiently solve this problem, we exploit the alternating minimization method. Specifically, the three variables $\bar{\mb{A}}$, $\widetilde{{\mb{D}}}$, and $\bar{\mb{F}}_c$ are updated alternatively while keeping the others fixed until convergence. 
If the updated solutions in each iteration are all optimal, the alternating minimization method is guaranteed to converge to a local optimum.

\subsubsection{Algorithm design} 

\begin{algorithm}[tb]
	\caption{$\!\!$: IQ-aware hybrid precoding with a FC-PS network}
 \label{alg1}
	\begin{algorithmic}[1]
		\REQUIRE ~ 
		The optimal IQ-aware fully digital precoder $\bar{\mb{F}}$.
		\STATE Initialize $\bar{\mb{F}}_c = \mb{0}$ and $\widetilde{\mb{I}}_c = \mb{I}_{2N_{\rm RF}}$,  and randomly initialize ($\mb{A}_I$, $\mb{A}_Q$); \\
  \STATE Calculate $\gamma = \sqrt{\frac{2P}{N_t}}$;
  \WHILE{not convergence}
    \STATE Update $\mb{A}_I$ and $\mb{A}_Q$ by \eqref{eq:AIQ}; \\
    \STATE Compute the SVD, $\widetilde{\mb{F}}^T\bar{\mb{A}} =\mb{V}_1\boldsymbol{\Sigma} \mb{V}_2^T$, and update $\widetilde{\mb{D}}$ by $\mb{V}_2\mb{V}_1^T$; \\
    \STATE Update $\bar{\mb{D}}_u = \widetilde{\mb{D}}_{:, 1:2N_s}$;
    \STATE Update $\bar{\mb{F}}_c = \gamma\bar{\mb{A}}\widetilde{\mb{D}}_{:, 1+2N_s:2N_{\rm RF}}$;
  \ENDWHILE
  \STATE $\bar{\mb{D}} = \sqrt{\frac{2P}{\mathsf{Tr}\{\bar{\mb{A}}\bar{\mb{D}}_u\bar{\mb{D}}^T_u\bar{\mb{A}}^T\}}} \bar{\mb{D}}_u$;
  \ENSURE ~ Optimized precoder $\bar{\mb{D}}$, $\mb{A}_I$, and $\mb{A}_Q$. \\
	\end{algorithmic}
\end{algorithm}

The step-by-step procedures of the proposed IQ-aware hybrid precoding algorithm for a FC-PS network are summarized in Algorithm~\ref{alg1}, with detailed explanations provided below.

\emph{\textbf{i) Fix ($\widetilde{{\mb{D}}}$, $\bar{{\mb{F}}}_c$) and optimize $\bar{\mb{A}}$}}: We start by deriving the optimal solution for $\bar{\mb{A}}$ after fixing matrices $\widetilde{{\mb{D}}}$ and $\bar{{\mb{F}}}_c$. Denote the block matrix components of $\widetilde{\mb{F}}\widetilde{\mb{D}}^T$ as $\mb{Z}_{k\ell}\in\mathbb{R}^{N_t\times N_{\rm RF}}$, $k\in\{1,2\},\:\ell\in\{1,2\}$, such that 
\begin{align} \label{eq:Z}
   \widetilde{\mb{F}}\widetilde{\mb{D}}^T  =
    {\left(\begin{array}{cc}
      \mb{Z}_{11} & \mb{Z}_{12}\\
      \mb{Z}_{21} & \mb{Z}_{22}
    \end{array}\right)}.
\end{align}
By substituting \eqref{eq:Z} into \eqref{eq:FCP2}, the objective function can be rewritten as 
\begin{align}\label{eq:obj_phase}
   &\| \widetilde{\mb{F}}\widetilde{\mb{D}}^T -  \gamma\bar{\mb{A}}\|_F^2 = \left\| \left(\begin{array}{cc}
      \mb{Z}_{11} - \gamma\mb{A}_I& \mb{Z}_{12} + \gamma\mb{A}_Q\\
      \mb{Z}_{21} -\gamma\mb{A}_Q& \mb{Z}_{22} -  \gamma\mb{A}_I
    \end{array}\right)\right\|_F^2 \notag\\
    &= \|\mb{Z}_I - \gamma\mb{A}_I\|_F^2 + \|\mb{Z}_Q -\gamma \mb{A}_Q\|_F^2 + C_1 \\
&\propto \sum_{m = 1}^{N_t}\sum_{n = 1}^{N_{\rm RF}} (\mb{Z}_{I,m,n} - \gamma\mb{A}_{I,m,n})^2 + (\mb{Z}_{Q,m,n} - \gamma\mb{A}_{Q,m,n})^2,\notag
\end{align}
wherein $\mb{Z}_I \overset{\Delta}{=} \mb{Z}_{11} + \mb{Z}_{22}$, $\mb{Z}_Q \overset{\Delta}{=} \mb{Z}_{21} - \mb{Z}_{12}$, and 
$C_1 = \gamma^2 \|\mb{A}_I\|_F^2 + \gamma^2 \|\mb{A}_Q\|_F^2 - 2\mathsf{Tr}(\mb{Z}_{11}^T\mb{Z}_{22}) + 2\mathsf{Tr}(\mb{Z}_{21}^T\mb{Z}_{12}) = \gamma^2N_tN_{\rm RF}- 2\mathsf{Tr}(\mb{Z}_{11}^T\mb{Z}_{22}) + 2\mathsf{Tr}(\mb{Z}_{21}^T\mb{Z}_{12})$. 
Equations \eqref{eq:obj_phase} and \eqref{eq:FCP2a} indicate that the joint optimization of analog precoder, $\mb{A}$, is equivalent to  independently optimizing its elements. Consider the PS connecting the $m$-th antenna and the $n$-th RF chain. The optimal precoder coefficient is obtained by solving 
\begin{align}
   \min_{\mb{A}_{m,n}}
   \:\: &(\frac{1}{\gamma}\mb{Z}_{I,m,n} - \mb{A}_{I,m,n})^2 + (\frac{1}{\gamma}\mb{Z}_{Q,m,n} - \mb{A}_{Q,m,n})^2 \label{eq:FCP2.1}\\
   \mathrm{s.t.}\:\:&\mb{A}_{I,m,n}^2 + \mb{A}_{Q,m,n}^2 =  1. \tag{\ref{eq:FCP2.1}a} \label{eq:FCP2.1a}
\end{align}
Problem \eqref{eq:FCP2.1} can be interpreted as finding a point $(\mb{A}_{I,m,n}, \mb{A}_{Q,m,n})$ on the unit circle that is closest to $(\frac{1}{\gamma}\mb{Z}_{I,m,n}, \frac{1}{\gamma}\mb{Z}_{Q,m,n})$, whose optimal solution is 
\begin{align}\label{eq:AIQ}
    \left\{ 
    \begin{array}{c}
         \mb{A}^\star_{I,m,n} = \frac{\mb{Z}_{I,m,n}}{\sqrt{\mb{Z}_{I,m,n}^2 + \mb{Z}_{Q,m,n}^2}}  \\
         \mb{A}_{Q,m,n}^\star = \frac{\mb{Z}_{Q,m,n}}{\sqrt{\mb{Z}_{I,m,n}^2 + \mb{Z}_{Q,m,n}^2}}
    \end{array}
    \right.
    .
\end{align}
This completes the update of $\mb{A}$ in Algorithm 1.

\textbf{\emph{ii) Fix ($\bar{{\mb{A}}}$, $\bar{{\mb{F}}}_c$) and optimize $\widetilde{\mb{D}}$}}: With fixed matrices $\bar{{\mb{A}}}$ and $\bar{{\mb{F}}}_c$,  the objective function in \eqref{eq:FCP2} can be rewritten as 
\begin{align}
    &\|\widetilde{\mb{F}}\widetilde{\mb{D}}^T - \gamma \bar{\mb{A}}\|_F^2 \notag\\ &= \mathsf{Tr}\{\widetilde{\mb{F}}^T\widetilde{\mb{F}}\widetilde{\mb{D}}^T\widetilde{\mb{D}}\} + \gamma^2\mathsf{Tr}\{\bar{\mb{A}}^T\bar{\mb{A}}\} 
    - 2\gamma\mathsf{Tr}\{\widetilde{\mb{D}} \widetilde{\mb{F}}^T\bar{\mb{A}}\} \notag \\
    &\overset{(a)}{=} \mathsf{Tr}\{\widetilde{\mb{F}}^T\widetilde{\mb{F}}\} + 2N_tN_{\rm RF}\gamma^2 - 2\gamma\mathsf{Tr}\{\widetilde{\mb{D}} \widetilde{\mb{F}}^T\bar{\mb{A}}\},
\end{align}
where (a) holds because $\widetilde{\mb{D}}^T\widetilde{\mb{D}} = \mb{I}_{2N_{\rm RF}}$ and $\mathsf{Tr}\{\bar{\mb{A}}^T\bar{\mb{A}}\} = 2N_tN_{\rm RF}$. 
Accordingly, the sub-problem for optimizing $\widetilde{\mb{D}}$ is equivalent to 
\begin{align}
    \max_{\widetilde{{\mb{D}}}^T\widetilde{\mb{D}} = \mb{I}_{2N_{\rm RF}}} \:\:&\mathsf{Tr}\{\widetilde{\mb{D}}\widetilde{\mb{F}}^T\bar{\mb{A}}\} \label{eq:OPP} 
\end{align}
Problem \eqref{eq:OPP} is an orthogonal Procrustes Problem.  To solve it, we can compute the SVD: $\widetilde{\mb{F}}^T\bar{\mb{A}} = \mb{V}_1\boldsymbol{\Sigma} \mb{V}_2^T$. Then, the objective function in \eqref{eq:OPP} is upper bounded by 
\begin{align}
    &\mathsf{Tr}\{\widetilde{\mb{D}}\widetilde{\mb{F}}^T\bar{\mb{A}}\} = \mathsf{Tr}\{\widetilde{\mb{D}}\mb{V}_1\boldsymbol{\Sigma} \mb{V}_2^T\} = \mathsf{Tr}\{ \mb{V}_2^T\widetilde{\mb{D}}\mb{V}_1\boldsymbol{\Sigma}\} \notag \\& \quad\quad\quad= \sum_{n = 1}^{2N_{\rm RF}} (\mb{V}_2^T\widetilde{\mb{D}}\mb{V}_1)_{n,n} \boldsymbol{\Sigma}_{n,n}
    \overset{(b)}{\le}
    \sum_{n = 1}^{2N_{\rm RF}} \boldsymbol{\Sigma}_{n,n},
\end{align}
where (b) arises because  $\mb{V}_2^T\widetilde{\mb{D}}\mb{V}_1$ is an orthogonal matrix. The equality holds if and only if $\mb{V}_2^T\widetilde{\mb{D}}\mb{V}_1 = \mb{I}_{2N_{\rm RF}}$.
Henceforth, the optimal $\widetilde{\mb{D}}$ is obtained as 
\begin{align} \label{eq:Dopt}
    \widetilde{\mb{D}}^\star = \mb{V}_2\mb{V}_1^T,
\end{align}
which allows us to update $\bar{\mb{D}}_u^\star$ by $\widetilde{\mb{D}}^\star_{:, 1:2N_s}$ in step 6 of Algorithm 1.

\textbf{\emph{iii) Fix ($\widetilde{{\mb{D}}}$, $\bar{\mb{A}}$) and optimize $\bar{{\mb{F}}}_c$}}: We finally consider updating the auxiliary matrix $\bar{{\mb{F}}}_c$ given $\widetilde{{\mb{D}}}$ and $\bar{\mb{A}}$. To this end, we explicitly express the the objective function $\| \widetilde{\mb{F}} -  \bar{\mb{A}}\widetilde{\mb{D}}\|_F^2$ with respect to $\bar{{\mb{F}}}_c$  as 
\begin{align}
\| (\bar{\mb{F}}, \bar{\mb{F}}_c) -  \gamma \bar{\mb{A}}(\bar{\mb{D}}_u, \bar{\mb{D}}_c)\|_F^2 = \| \bar{\mb{F}}_c -  \gamma\bar{\mb{A}}\bar{\mb{D}}_c\|_F^2 + C_2, 
\end{align}
where $C_2 = \|\bar{\mb{F}} -  \gamma \bar{\mb{A}}\bar{\mb{D}}_u\|_F^2$ is irrelevant to $\bar{\mb{F}}_c$. 
It is evident that the optimal $\bar{\mb{F}}_c$ needs to cancel the least square error introduced by  $\gamma\bar{\mb{A}}\bar{\mb{D}}_c$:
\begin{align}
    \bar{\mb{F}}_c^\star = \gamma\bar{\mb{A}}\bar{\mb{D}}_c = \gamma \bar{\mb{A}}\widetilde{{\mb{D}}}_{:, 1+2N_s:2N_{\rm RF}}.
\end{align}
This completes the update of $\bar{\mb{F}}_c$ in step 7. 

\textbf{\emph{iv) Power normalization}}:
Recall that the previously calculated $\gamma$ is an approximated value. 
After the alternating minimization algorithm converges, 
we normalize the digital precoder $\bar{\mb{D}}_u$ by $\sqrt{\frac{2P}{\mathsf{Tr}\{\bar{\mb{A}}\bar{\mb{D}}_u\bar{\mb{D}}^T_u\bar{\mb{A}}^T\}}}\bar{\mb{D}}_u$ to guarantee the total power constraint. This completes the design of IQ-aware hybrid precoding with a FC-PS network.

\subsection{IQ-Aware Hybrid Precoding with a SC-PS Network}\label{sec:5.2}
\subsubsection{Problem statement}
To show that our proposed IQ-aware hybrid precoding is feasible to general analog circuits, we now turn to the investigation of SC-PS networks. In this architecture, each RF is connected to a subset of the BS array with
$K = N_t/N_{\rm RF}$ antennas. The number of required PSs is thus $N_t$. With much reduced PSs, the SC architecture exhibits lower spectral efficiency than the FC one, but is more energy-efficient.  
The sub-connected feature states that the analog precoder $\mb{A} = \mb{A}_I + j\mb{A}_Q$ is now a block diagonal matrix: 
\begin{align}
    \mb{A} = {\rm blkdiag}\{\mb{p}_1, \mb{p}_2, \cdots, \mb{p}_{N_{\rm RF}}\},
\end{align}
where $\mb{p}_n = [{\rm exp}(j\theta_{n,1}), \cdots, {\rm exp}(j\theta_{n,K})]^T$ represent the PSs connected to the $n$-th RF chain. 
Thanks to this block diagonal property, one can verify that $\mb{A}^H\mb{A} = K\mb{I}_{N_{\rm RF}}$ and $\bar{\mb{A}}^T\bar{\mb{A}} = K\mb{I}_{2N_{\rm RF}}$ with ease. These facts automatically releases $\bar{\mb{A}}$ from the total power constraint because $ \frac{1}{2} \mathsf{Tr}\{\bar{\mb{A}}\bar{\mb{D}}\bar{\mb{D}}^T\bar{\mb{A}}^T\} =  \frac{K}{2} \mathsf{Tr}\{\bar{\mb{D}}\bar{\mb{D}}^T\} = P$. Henceforth, the SC hybrid precoding design problem is formulated as  
\begin{align}
   \max_{\bar{\mb{A}}, \bar{\mb{D}}}\:\: &\|\bar{\mb{F}} - \bar{\mb{A}}\bar{\mb{D}}\|_F^2 \label{eq:SCP1}\\
    \mathrm{s.t.}\:\:&\mb{A} = {\rm blkdiag}\{\mb{p}_1, \mb{p}_2, \cdots, \mb{p}_{N_{\rm RF}}\}, \tag{\ref{eq:SCP1}a} \label{eq:SCP1a} \\
    &\mathsf{Tr}\{\bar{\mb{D}}\bar{\mb{D}}^T\} = {2P}/{K}.\tag{\ref{eq:SCP1}b}\label{eq:SCP1b}
\end{align}
Clearly, problem \eqref{eq:SCP1} differs from the FC hybrid precoding problem \eqref{eq:FCP1} by two aspects: 1) $\mb{A}$ is block diagonal and 2) $\mb{A}$ is independent of the total power constraint. 
These differences enable us to  alternatively optimize precoders $\bar{\mb{A}}$ and $\bar{\mb{D}}$ to solve \eqref{eq:SCP1}, without the need of performing problem transformation as in Section \ref{sec:5.1.2}.

\subsubsection{Algorithm design} The step-by-step procedures of our SC hybrid precoding algorithm is summarized in Algorithm~\ref{alg2}, while the details are explained below. 



\begin{algorithm}[tb]
	\caption{$\!\!$: IQ-aware hybrid precoding with a SC-PS network}
 \label{alg2}
	\begin{algorithmic}[1]
		\REQUIRE ~ 
		The optimal IQ-aware fully digital precoder $\bar{\mb{F}}$.
		\STATE Randomly initialize $\mb{A}_I$, $\mb{A}_Q$, and $\bar{\mb{D}}$. \\
  \WHILE{not convergence}
    \STATE Update $\mb{A}_I$ and $\mb{A}_Q$ by \eqref{eq:SC_phase}; \\
    \STATE Update $\bar{\mb{D}}$ by \eqref{eq:Dopt2};
  \ENDWHILE
  \ENSURE ~ Optimized precoder $\bar{\mb{D}}$, $\mb{A}_I$, and $\mb{A}_Q$. \\
	\end{algorithmic}
\end{algorithm}

\textbf{\emph{i) Fix $\bar{\mb{D}}$ and optimize $\bar{\mb{A}}$}}:
To find the optimal $\bar{\mb{A}}$,  we first 
substitute $\bar{\mb{A}}^T\bar{\mb{A}} = K\mb{I}_{2N_{\rm RF}}$ into \eqref{eq:SCP1} and rewrite the objective function as
\begin{align}
    \|\bar{\mb{F}} - \bar{\mb{A}}\bar{\mb{D}}\|_F^2 = C_3 - 2\mathsf{Tr}\{\bar{\mb{D}}\bar{\mb{F}}^T\bar{\mb{A}}\}, \label{eq:phase2}
\end{align} 
where $C_3 = \mathsf{Tr}\{\bar{\mb{F}}^T\bar{\mb{F}}\} +K\mathsf{Tr}\{\bar{\mb{D}}^T\bar{\mb{D}}\}$. Then, by defining $\mb{Y}_{k\ell}\in\mathbb{R}^{N_t\times N_{\rm RF}}$, $k\in\{1,2\},\:\ell\in\{1,2\}$ as the block matrix components of $\bar{\mb{F}}\bar{\mb{D}}^T$ such that
\begin{align}
   \bar{\mb{F}}\bar{\mb{D}}^T  =
    {\left(\begin{array}{cc}
      \mb{Y}_{11} & \mb{Y}_{12}\\
      \mb{Y}_{21} & \mb{Y}_{22}
    \end{array}\right)},
\end{align}
we can expand $\mathsf{Tr}\{\bar{\mb{D}}\bar{\mb{F}}^T\bar{\mb{A}}\}$ as  
\begin{align}\label{eq:phase4}
    &\mathsf{Tr}\left\{
    \left(
    \begin{array}{cc}
       \mb{Y}_{11}^T  &  \mb{Y}_{21}^T\\
       \mb{Y}_{12}^T  &  \mb{Y}_{22}^T
    \end{array}
    \right)\left(\begin{array}{cc}
      \mb{A}_I   & -\mb{A}_Q \\
      \mb{A}_Q   &  \mb{A}_I \\
    \end{array}\right)
    \right\} \notag \\
    & = \mathsf{Tr}\{
    (\mb{Y}_{11} + \mb{Y}_{22})^T\mb{A}_I + (\mb{Y}_{21} - \mb{Y}_{12})^T\mb{A}_Q
    \}  \notag \\
    & \overset{(a)}{=} \mathsf{Tr}\{\Re(\mb{Y}^H\mb{A})\} \overset{(b)}{\propto} -\frac{1}{2}\|\mb{Y} -  \mb{A}\|_F^2, 
\end{align}
where (a) arises due to the definition $\mb{Y}  \overset{\Delta}{=} (\mb{Y}_{11} + \mb{Y}_{22}) + j(\mb{Y}_{21} - \mb{Y}_{12})$, and (b) holds by adding the constant $-\frac{1}{2}(\|\mb{Y}\|_F^2 + \|\mb{A}\|_F^2)$ into $ \mathsf{Tr}\{\Re(\mb{Y}^H\mb{A})\}$. 
By further considering the block-matrix property of $\mb{A}$ and the unit modulus constraint of PSs, the optimal phases $\{\theta_{n,k}\}$ in precoder $\mb{A}$ are thereby
\begin{align}\label{eq:SC_phase}
    &\quad\quad\quad\theta_{n, k}^\star = \arg\left\{\mb{Y}_{ (n - 1)K + k,  n}\right\}, \notag \\
    &\quad\quad\quad\quad\quad\quad\quad\quad  1\le n\le N_{\rm RF},\: 1\le k \le K.
\end{align}

\textbf{\emph{ii) Fix $\bar{\mb{A}}$ and optimize $\bar{\mb{D}}$}}: For a fixed analog precoder, the optimization of $\bar{\mb{D}}$ is a quadratic programming problem with quadratic constraint. The optimal solution can be derived using the Karush-Kuhn-Tucker (KKT) conditions~\cite{ConvexProgramming} as 
\begin{align}\label{eq:Dopt2}
   \bar{\mb{D}}^\star &= \left(\bar{\mb{A}}^T\bar{\mb{A}} + \lambda^\star \mb{I}_{2N_{\rm RF}} \right)^{-1}\bar{\mb{A}}^T\bar{\mb{F}} = \frac{1}{K + \lambda^\star} \bar{\mb{A}}^T\bar{\mb{F}} \notag \\
   &\overset{(a)}{=}\sqrt{\frac{2P}{K\|\bar{\mb{A}}^T\bar{\mb{F}}\|_F^2}}\bar{\mb{A}}^T\bar{\mb{F}}.
\end{align}
wherein (a) holds by selecting a Lagrange multiplier $\lambda^\star$ to fulfill the total power constraint \eqref{eq:SCP1b}. This completes the design of IQ-aware hybrid precoding with a SC-PS network.

\subsection{Convergence and Complexity Analysis}\label{sec:5.3}
\subsubsection{Convergence analysis}\label{sec:5.3.1}
For both Algorithms 1 and 2, the updated solutions in each iteration can monotonically decrease the objective functions $\|\widetilde{\mb{F}}\widetilde{\mb{D}}^T - \bar{\mb{A}}\|_F^2$ and $\|\bar{\mb{F}} - \bar{\mb{A}}\bar{\mb{D}}\|_F^2$. Moreover, these two functions have a lower bound of zero. By invoking the monotone convergence theorem, the convergence of Algorithms 1 and 2 are guaranteed.

\subsubsection{Computational complexity analysis}\label{sec:5.3.2}
To evaluate the computational complexities of Algorithms~\ref{alg1} and \ref{alg2}, we access the number of scalar multiplications required by each algorithm.
Consider one iteration of Algorithm~\ref{alg1}. Updating $\mb{A}_I$ and $\mb{A}_Q$ 
necessitates the calculation of matrices $\mb{Z}_I$ and $\mb{Z}_Q$, which has a complexity of $\mathcal{O}(N_tN_{\rm RF}^2)$. 
Next, the optimization of $\widetilde{\mb{D}}_u$ and $\widetilde{\mb{F}}_c$ involves the calculation of matrix $\widetilde{\mb{F}}^T\widetilde{\mb{A}}$ and its SVD, resulting in a complexity on the order of $\mathcal{O}(N_t N_{\rm RF}^2 + N_{\rm RF}^3)$.
In summary, the overall complexity of Algorithm~\ref{alg1} is $\mathcal{O}(I_0(N_tN_{\rm RF}^2  +N_{\rm RF}^3))$, where $I_0$ is the number of iterations required for convergence. 
In addition, consider one iteration of Algorithm~\ref{alg2}.  
The complexity for updating $\mb{A}_I$ and $\mb{A}_Q$ primarily results from computing the expression of matrix $\mb{Y}$. This step has a complexity of $\mathcal{O}(N_t N_{\rm RF}N_s)$. 
Subsequently, the complexity of updating $\bar{\mb{D}}$ by \eqref{eq:Dopt} is on the order of $\mathcal{O}(N_tN_{\rm RF} N_s + N_{\rm RF}N_s^2)$.
To summarize, the computational complexity of Algorithm~\ref{alg2} is $\mathcal{O}(I_0(N_tN_{\rm RF}N_s +  N_{\rm RF}N_s^2))$.

\section{Simulation Results}\label{sec:6}
\begin{figure*}
	\centering
	\subfigure[$N_r \times N_t = 2\times 2$]{
		\begin{minipage}[t]{0.33\linewidth}
			\centering
			\includegraphics[width=2.3in]{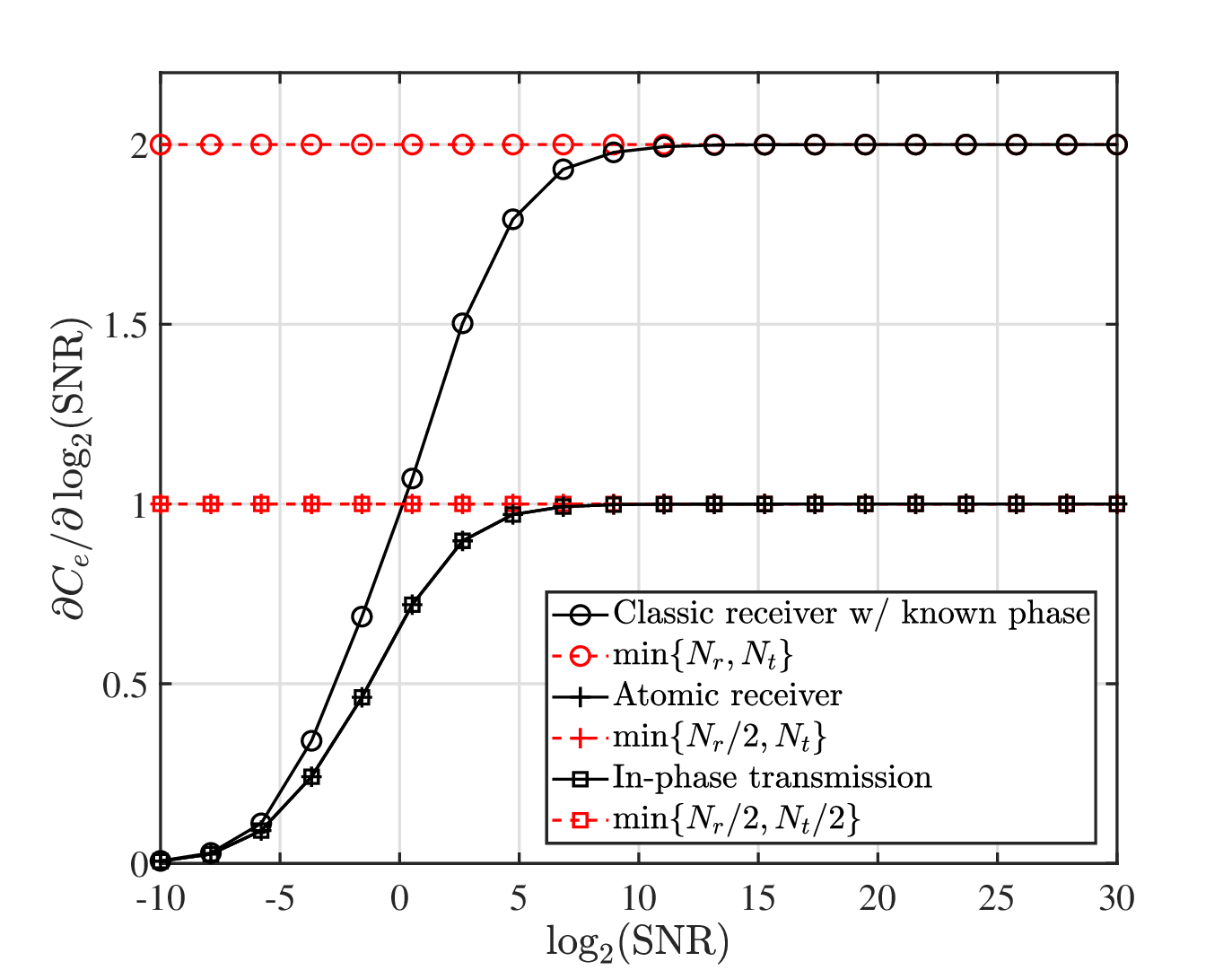}
		\end{minipage}%
	}%
	\subfigure[$N_r \times N_t = 3\times 2$]{
		\begin{minipage}[t]{0.33\linewidth}
			\centering
			\includegraphics[width=2.3in]{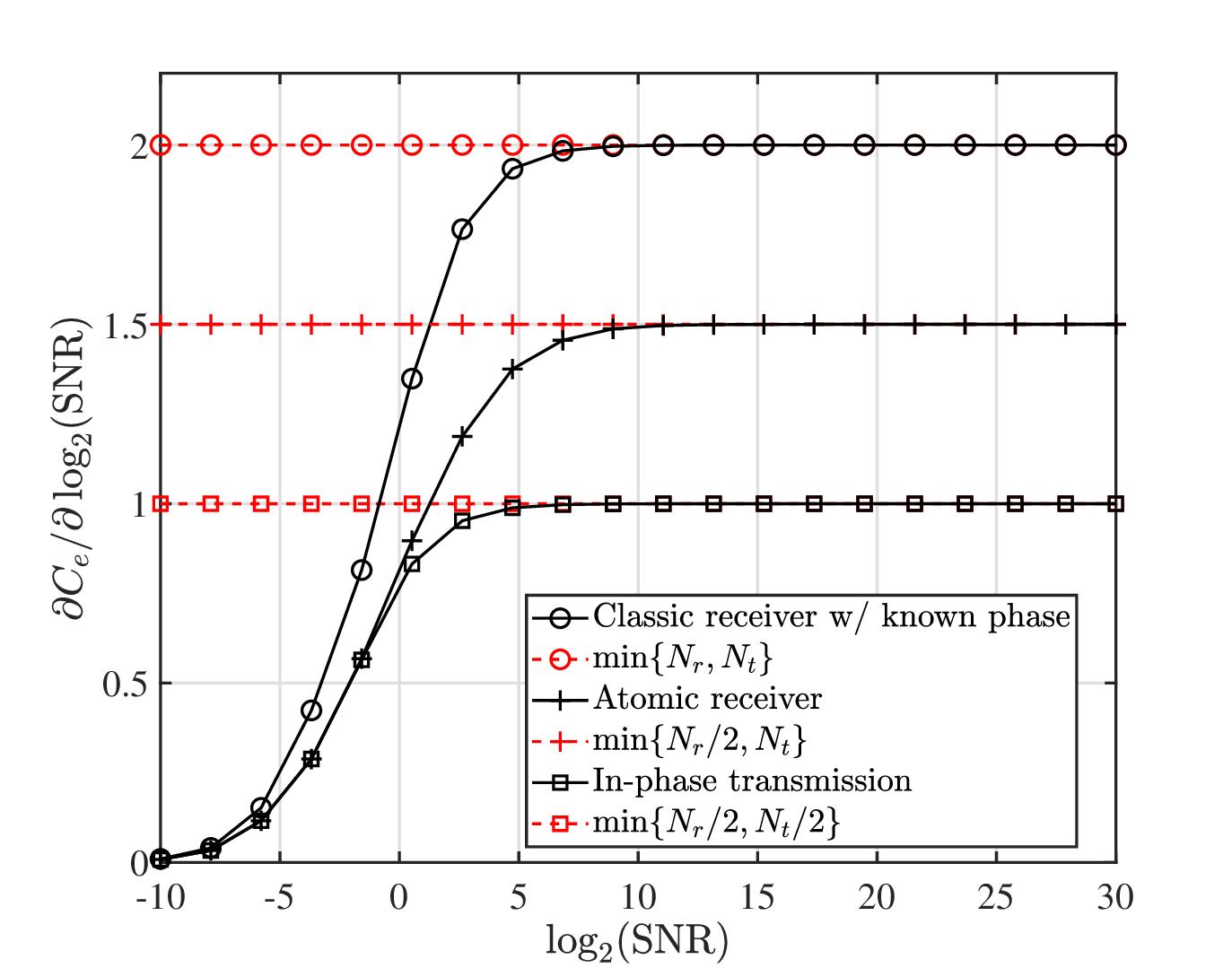}
		\end{minipage}%
	}%
	\subfigure[$N_r \times N_t = 4\times 2$]{
		\begin{minipage}[t]{0.33\linewidth}
			\centering
			\includegraphics[width=2.3in]{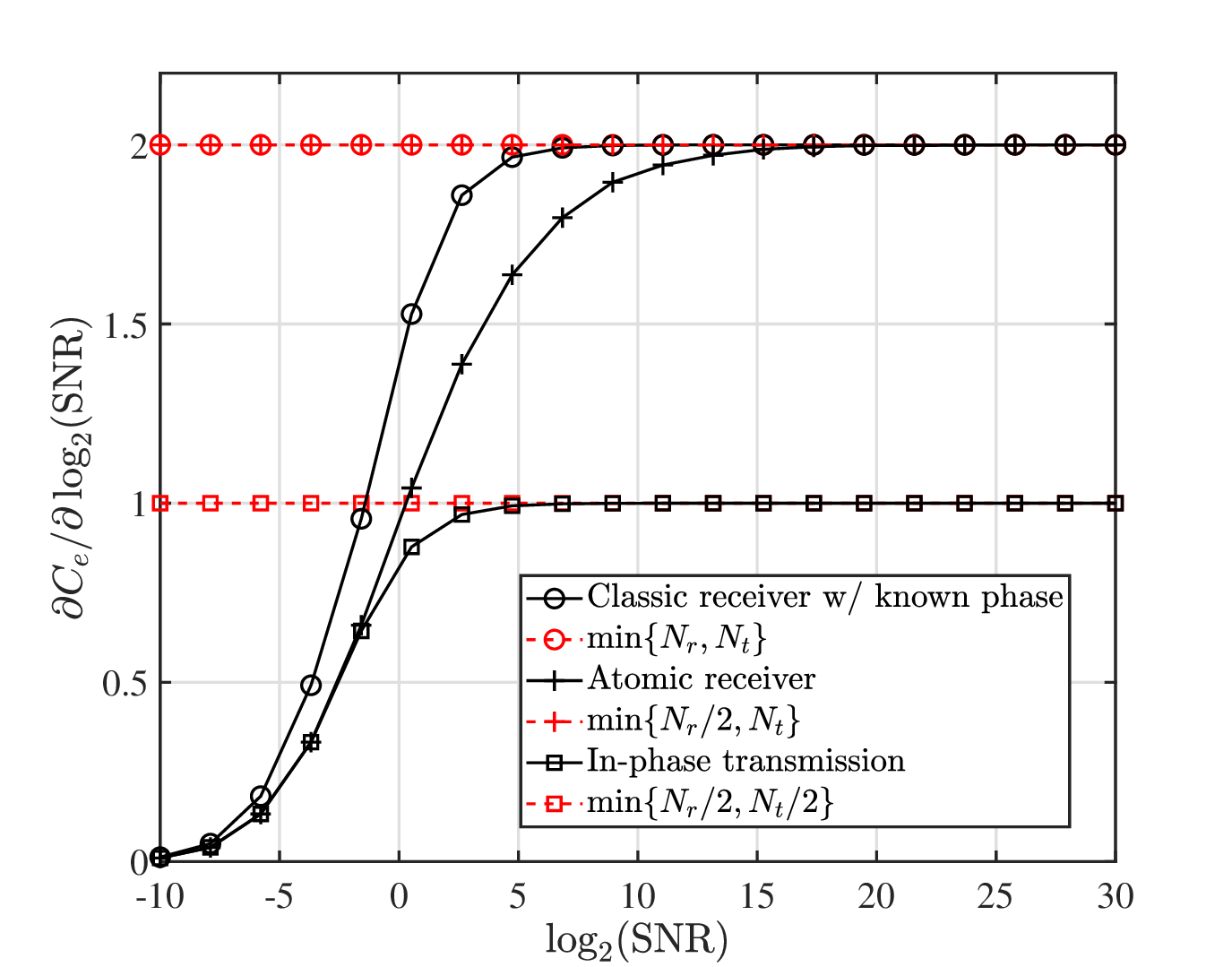}
		\end{minipage}%
	}%
	\centering
	\caption{The derivative of the achievable rate $C$ w.r.t $\log\mathrm{SNR}$ under the (a) $2\times 2$ MIMO, (b) $3\times 2$ MIMO, and (c) $4\times 2$ MIMO configurations.}
	\vspace{-0.2cm}
	\label{img:cl}
\end{figure*}

\subsection{Simulation Settings}
The default simulation settings are as follows. The Rydberg energy levels are set as $62D_{5/2}$ and $64P_{3/2}$ to measure $\omega = 2\pi \times 27.7\:{\rm GHz}$ radio frequency signals. The wavelength is $\lambda = \frac{2\pi c}{\omega} = 10.83\:{\rm mm}$. 
Utilizing the Python package~\cite{SIBALIC2017319}, the transition dipole moment $\boldsymbol{\mu}_{eg}$ over the states $62D_{5/2}$ and $64P_{3/2}$ is calculated as $[0, 789.107\:qa_0, 0]^T$,
where $a_0 = 5.292 \times 10^{-11}$~m specifies the Bohr radius, and $q = 1.602\times 10^{-19}\:{\rm C}$ is the charge of an electron. 
The uniform linear arrays (ULAs) are deployed at the transceiver with an equal antenna spacing $d = 10\:{\rm mm}$. The transmitter-to-receiver channel coefficients are generated using the standard Saleh-Valenzuela multi-path channel model. 
Specifically, the number of paths is set to $L = 10$. For each path, the complex coefficient $\rho_{mnl}e^{j\phi_{mnl}}$ is generated by the equation $\rho_{mnl}e^{j\phi_{mnl}} = \alpha_l e^{j\frac{d}{\lambda}(m\sin\vartheta_l + n\sin\varphi_l)}$ according to the ULA geometry. Here, the path gain $\alpha_l$ is sampled from the complex Gaussian distribution $\mathcal{CN}(0, 1)$. The incident and departure angles, $\vartheta_l$  and  $\varphi_l$, are sampled from the uniform distribution $\mathcal{U}(-90^\circ, 90^\circ)$. Besides, the polarization direction $\boldsymbol{\epsilon}_{nkl}$ is randomly sampled on a unit circle perpendicular to the incident angle.  
As for the LO-to-receiver channel, since the LO-to-receiver distance is quite small, we fix the path losses $\{\rho_{R,m}\}$ as $\rho_R$ and generate the phases $\{\phi_{R,m}\}$ from the distribution $\mathcal{U}(0, 2\pi)$.
The \emph{SNR}  is defined as $\frac{P}{\sigma^2}$, while the \emph{receive SNR} is defined as 
\begin{align}
    {\rm Receive\:\:SNR} = \frac{\mathsf{E}(\|{\mb{H}}\mb{x}\|_2^2)}{\mathsf{E}(\|\mb{n}\|_2^2)}.
\end{align}
To quantify the relative strength of the reference signal $\mb{r}$, we introduce the \emph{reference-to-signal-and-noise ratio} (RSNR) defined as 
\begin{align}
    {\rm RSNR} = \frac{\mathsf{E}(\| \mb{r}\|_2^2)}{\mathsf{E}(\|{\mb{H}}\mb{x} + \mb{n}\|_2^2)}.
\end{align}
In our simulations, the parameters, $\rho_R$, $P$, and $\sigma^2$, are properly adjusted to increase the receive SNR from -10 dB to 5 dB and the RSNR from -5 dB to 25 dB accordingly.

\subsection{DoFs of Atomic Receivers}

To begin with, the DoFs of atomic MIMO receivers under high SNR regimes are evaluated in Fig.~\ref{img:cl}. We conduct 1000 Monte-Carlo trails to simulate the derivative of the channel capacity $C$ w.r.t. $\log{\rm SNR}$. The plotted curves include: 1) the simulated DoFs of a classic receiver when the phase of $\mb{H}\mb{x} + \mb{r} + \mb{w}$ is known, whose DoFs are theoretically $\min(N_t, N_r)$; 2) the simulated DoFs of an atomic receiver with theoretical DoFs of $\min(N_t/2, N_r)$; and 3) the simulated DoFs of in-phase transmission, where 
 only the in-phase symbols, $\mb{s}_I$, are transmitted and its theoretical DoFs are $\min(N_t/2, N_r/2)$. 
 The number of the transmit antennas is fixed as $N_t = 2$, while the number of receive antennas, $N_r$, grows from $2$ to $4$. 
 We can observe from Fig.~\ref{img:cl} that, with the growth of $N_r$, the DoFs of an atomic receiver gradually increase from 1 to 2. Thus, its capacity limit resembles the in-phase transmission when $N_r = N_t = 2$ and becomes equivalent to the classic receiver when $N_r = 2N_t = 4$. This observation is consistent with our analytical results in Section~\ref{sec:3}.  
 Even if an atomic receiver is hard to measure the phase of incident EM waves, it experiences no loss in DoF with the help of a strong reference signal and sufficient number of receive antennas.
 By further considering that the atomic receiver has a stronger sensitivity and thus has a much higher receive SNR, 
 it has the potential to outperform the classic counterpart. 

\subsection{Validation of Strong Reference Approximation}
\begin{figure}
    \centering
    \includegraphics[width=3.5in]{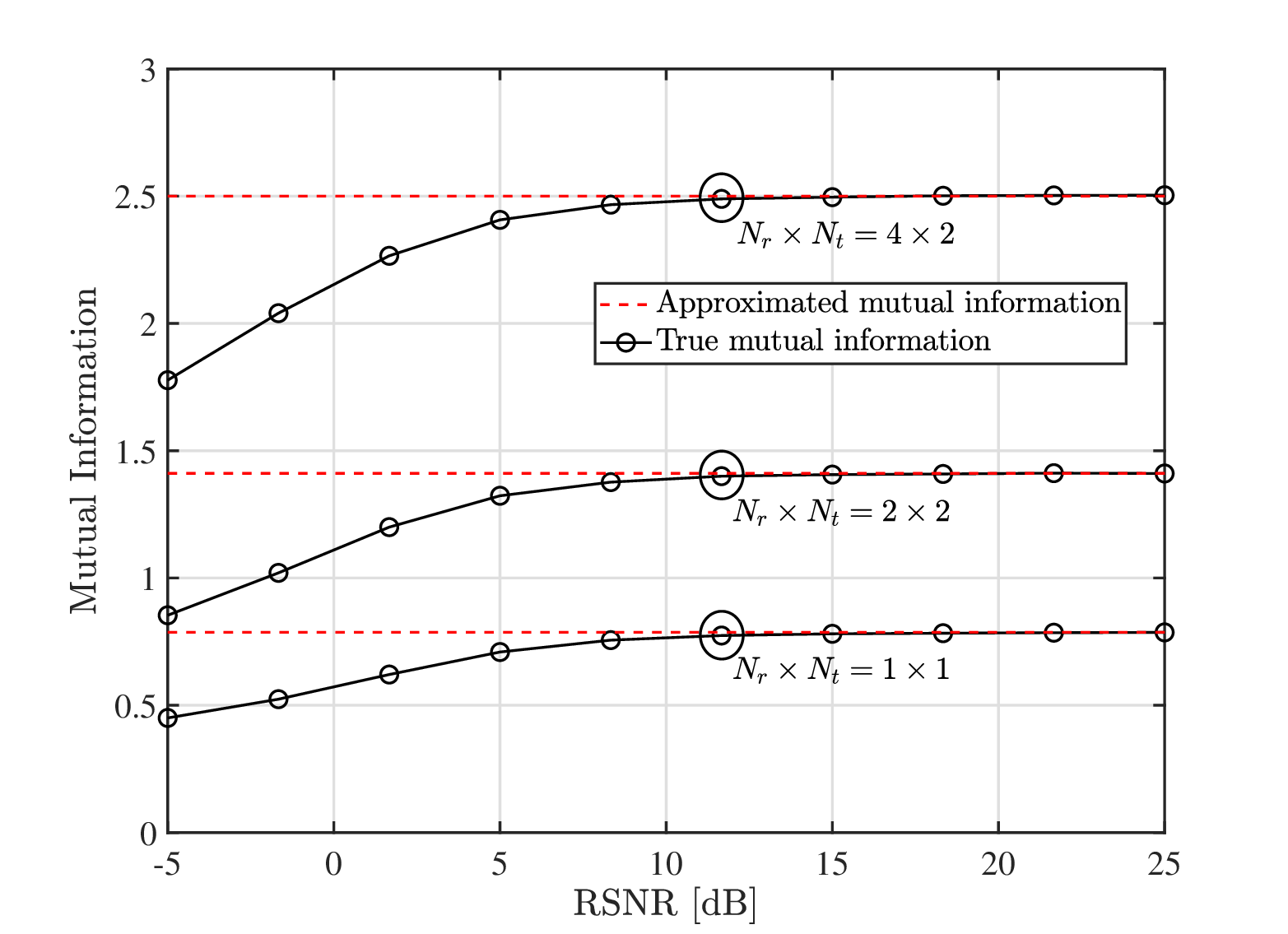}
    \caption{Mutual information between $\mb{y}$ and $\mb{x}$ against RSNR for the magnitude-detection model in \eqref{eq:model} and real-part-detection model in \eqref{eq:real}.} 
	\label{img:MI_RSNR}
\end{figure}
To validate the accuracy of the strong-reference approximation, we plot the curves of the approximated mutual information, $\mathcal{I}(\Re(\widetilde{\mb{H}}\mb{x}) + \mb{w}_I; \mb{x})$, and the true mutual information, $\mathcal{I}(\mb{y}; \mb{x})$, derived from the nonlinear model \eqref{eq:model}. As presented in Fig.~\ref{img:MI_RSNR}, three transmitter and receiver configurations are considered: $N_r \times N_t = 1\times 1$,  $2\times 2$, and $4\times 2$. The RSNR grows from $-5$ dB to $25$ dB, while the receive SNR is fixed as 0 dB. Owing to the nonlinear input-output relationship, $\mathcal{I}(\mb{y}; \mb{x})$ has no analytical expressions. Furthermore, calculating $\mathcal{I}(\mb{y}; \mb{x})$ numerically involves the high-dimensional integration of noncentral chi-square distribution, which is computationally prohibitive in practice. To address this issue, we adopt the Monte-Carlo method proposed in \cite{nonlinear_psaltopoulos2010} to obtain $\mathcal{I}(\mb{y}; \mb{x})$ numerically. As illustrated in Fig.~\ref{img:MI_RSNR}, the true mutual information gradually converges to the approximated mutual information as RSNR increases. When RSNR is greater than 10 dB, the relative error of mutual information is no more than 1\%, which is accurate enough for practical approximation. 
Given that the LO-to-atomic-receiver distance is at the centimeter level, which is dozens of times shorter than the transmitter-to-atomic-receiver distance at the meter level, achieving a 10-dB RSNR is feasible in practical scenarios. This validates the effectiveness of the strong-reference approximation. 

\subsection{Achievable Rates of Proposed IQ-aware Precoding}
\begin{figure}
    \centering
    \subfigure[Achievable rate vs receive SNR]{\includegraphics[width=3.5in]{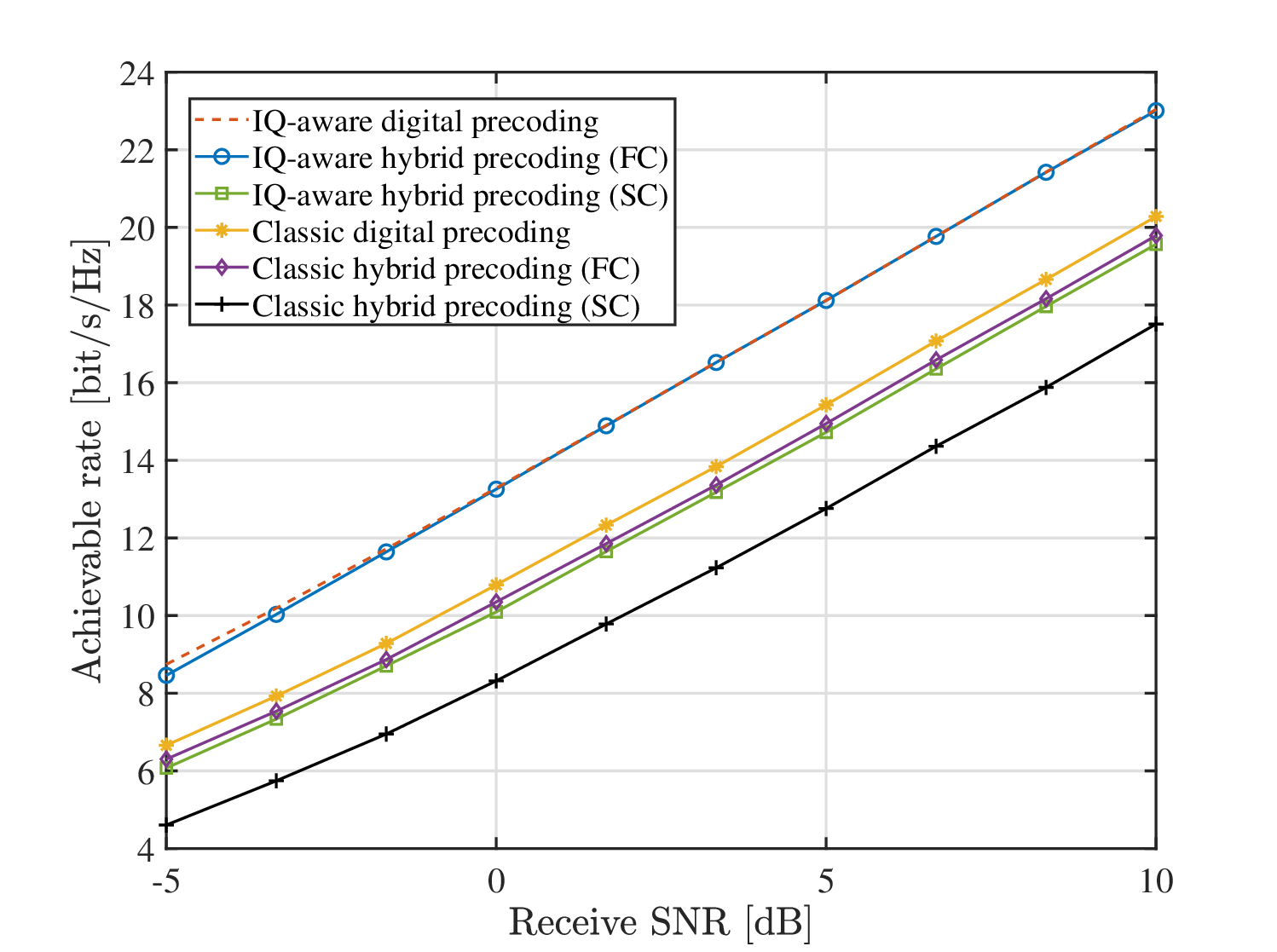}}\\
    \subfigure[Achievable rate vs number of receive antennas]{\includegraphics[width=3.5in]{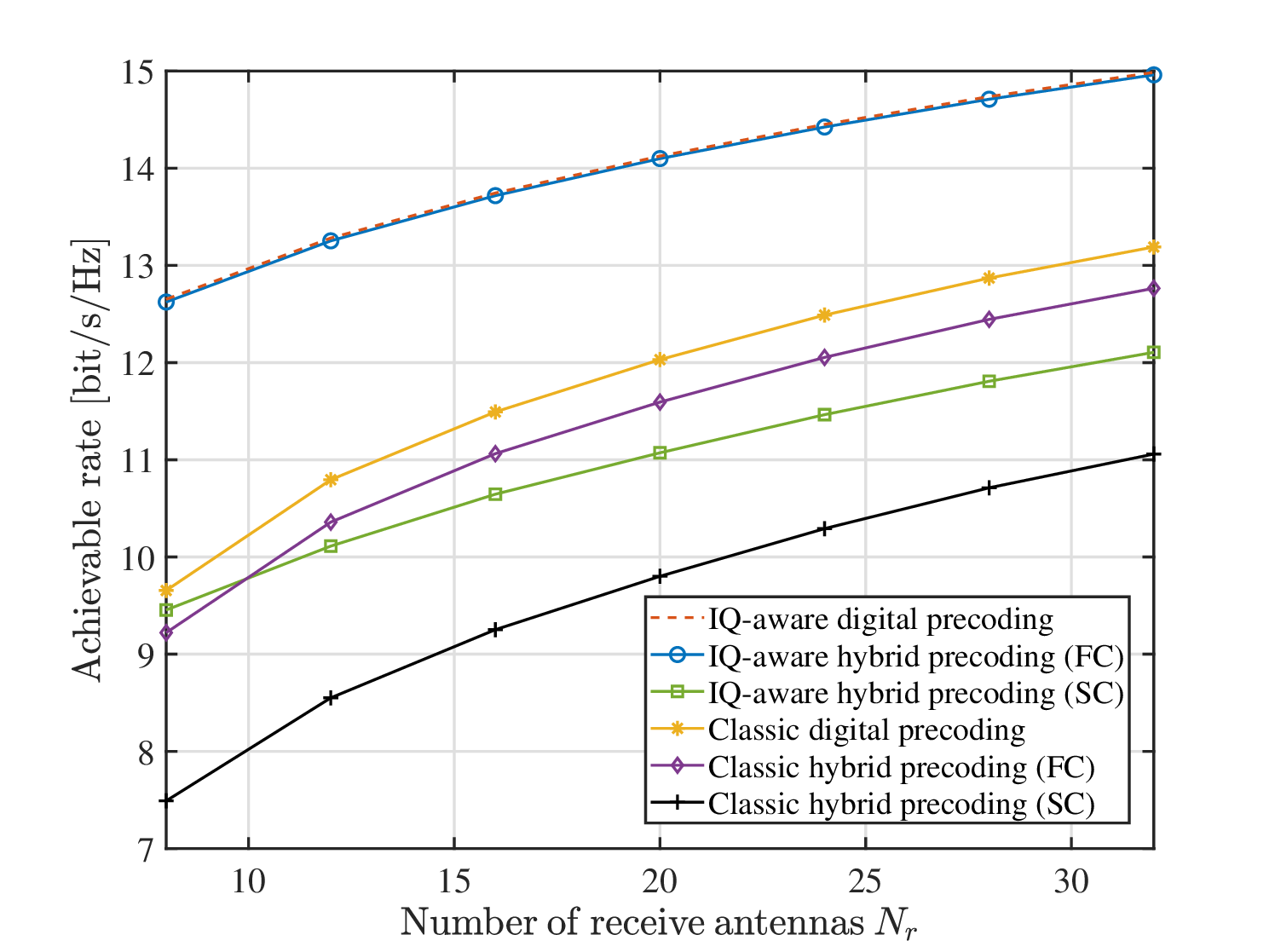}} 
    \caption{The achievable rate [bits/s/Hz] as a function of (a) the receive SNR [dB] and (b) the number of receive antennas, $N_r$.} 
	\vspace*{-1em}
	\label{img:rate}
\end{figure}
The curves in Fig.~\ref{img:rate} compare different transmitter precoding approaches. The achievable rate of atomic MIMO receivers in \eqref{eq:MI} is adopted for performance evaluation. 
The benchmarks include: 1) the optimal IQ-aware digital precoding;  2)-3) the proposed IQ-aware hybrid precoding for the FC- and SC-PS structures; 4) the classic digital precoding, which assigns the complex precoder $\mb{F}$ with the principal singular vectors of the complex channel $\widetilde{\mb{H}}$ associated with water-filling principle; 5)-6) the classic hybrid precoding algorithms, PE-AltMin for the FC structure and SDR-AltMin algorithm for the SC structure~\cite{yu_alternating_2016}. For both sub-figures, the number of data streams, RF chains, and transmit antennas are set to $N_s = 3$, $N_{\rm RF} = 12$, and $N_t = 48$. In Fig.~\ref{img:tx} (a), the number of receive antennas is $N_r = 12$ and the receive SNR grows from -5~dB to 10~dB, while in Fig.~\ref{img:tx} (b), the receive SNR is fixed as 0 dB and $N_r$ increases from 8 to 32. We run 1000 Monte-Carlo trails to plot each curve.

We can observe that all proposed IQ-aware precoding schemes outperform their traditional counterparts for all considered SNRs and the number of receive antennas. For example, when ${\rm SNR} = 0$ dB and $N_r = 12$, the gap in achievable rate is approximately 3 bit/s/Hz for the FC hybrid precoding architecture and around 2 bit/s/Hz for the SC hybrid precoding architecture, demonstrating the effectiveness of the proposed design.  
One can also observe that with the increase in SNR, the rate achieved by IQ-aware hybrid precoding (FC) gradually approaches that of the optimal IQ-aware digital precoding. This comes from the fact that the power allocation to the orthogonal columns of the digital precoder $\bar{\mb{F}}$ tend to be uniform at high SNRs, making the orthogonal-column assumption applied on $\bar{\mb{D}}$ more precise. Consequently, we can conclude that an analog FC-PS network associated with a low-dimensional IQ-aware digital precoder is able to accurately approximate a high--dimensional IQ-aware digital precoder. 



\subsection{Convergence Verification}
\begin{figure}
    \centering
    \subfigure[IQ-aware FC hybrid precoding]{\includegraphics[width=3.5in]{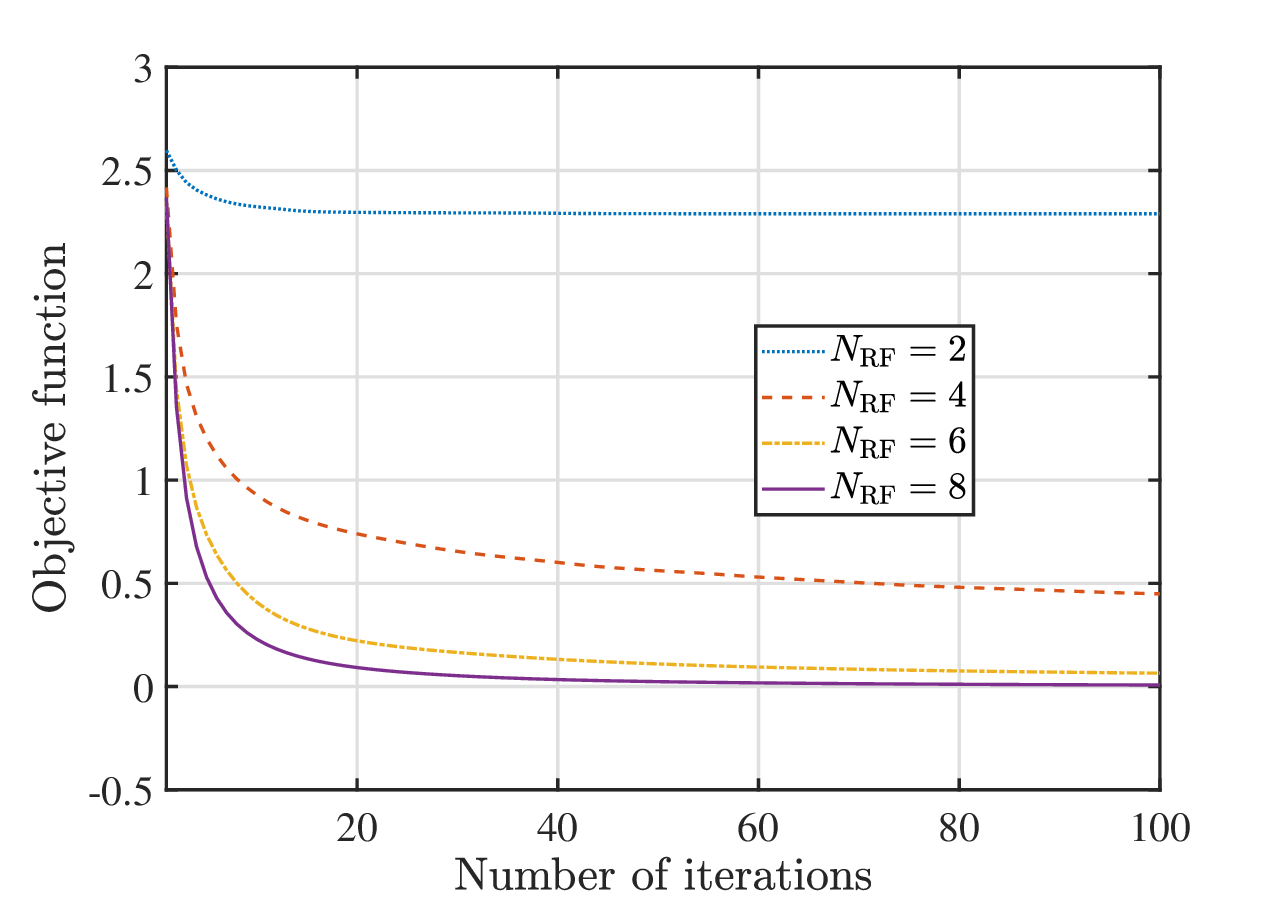}}\\
    \subfigure[IQ-aware SC hybrid precoding]{\includegraphics[width=3.5in]{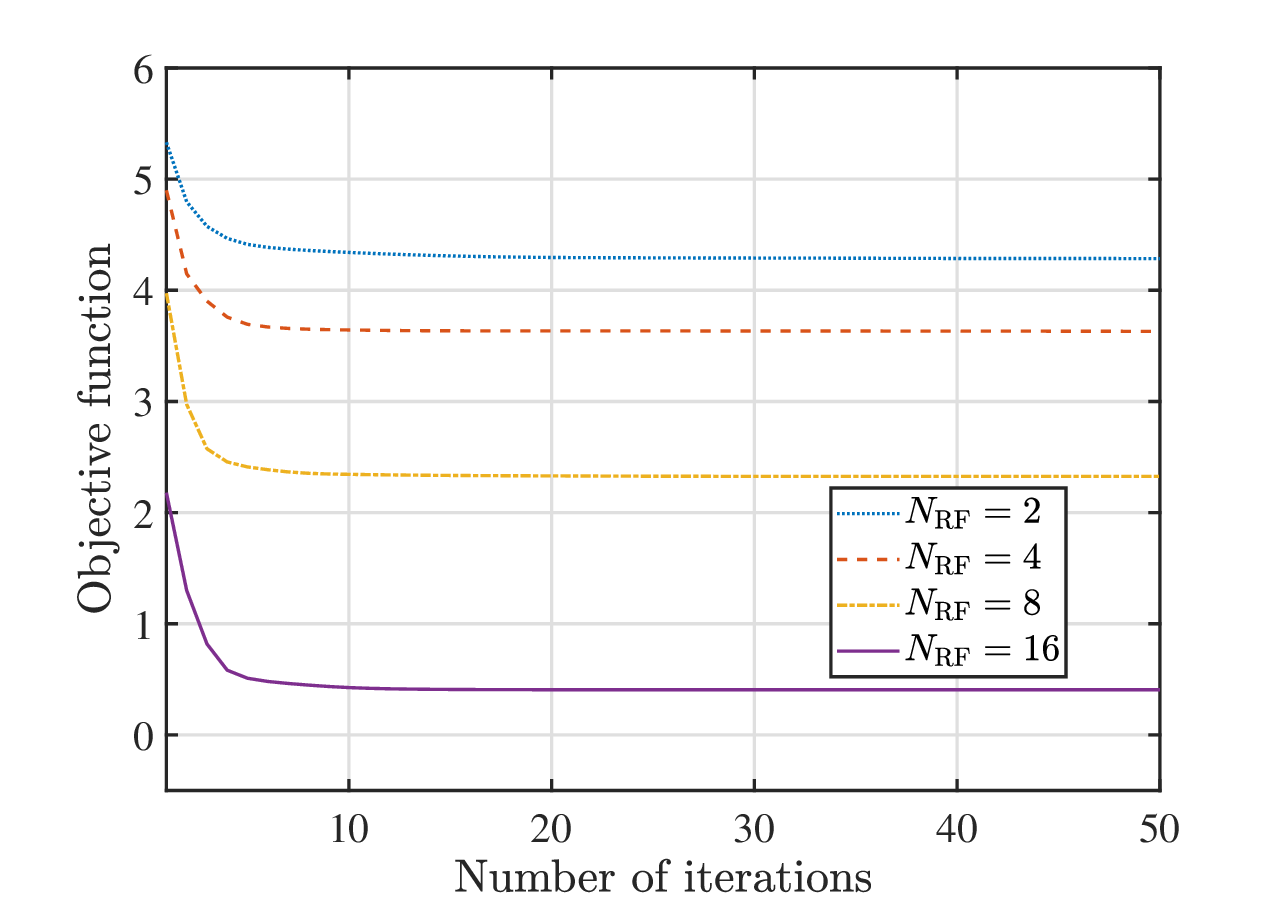}} 
    \caption{Objective functions, \eqref{eq:FCP2} and \eqref{eq:SCP1}, w.r.t the number of iterations.} 
	\vspace*{-1em}
	\label{img:rate_iter}
\end{figure}

The curves in Fig.~\ref{img:rate_iter} demonstrate the convergence of Algorithm 1 and Algorithm 2. The value of objective functions, $\| \widetilde{\mb{F}}\widetilde{\mb{D}}^T -  \bar{\mb{A}}\|_F^2$ and $\| \bar{\mb{F}} -  \bar{\mb{A}}\bar{\mb{D}}\|_F^2$, in each iteration are plotted in Fig.~\ref{img:rate_iter}(a) and (b), respectively. The simulation parameters are as follows: $N_s = 2$, $N_t = 32$, $N_r = 12$, and ${\rm SNR} = 0$ dB. It is clear that our proposed algorithms can monotonically decrease the objective functions until convergence. The number of iterations required for convergence is around 50 for Algorithm~1 and is around 10 for Algorithm~2. The larger iteration number of Algorithm~1 comes from the updating of auxiliary variables. Moreover, since the FC structure has much more PSs, the objective function of Algorithm 1 can converge to 0 when $N_{\rm RF} = 8$, while the converged value of Algorithm 2 is larger than 0 when $N_{\rm RF} = 16$.

\section{Conclusions}\label{sec:7}

In this paper, we explored precoding design for achieving the capacity of Rydberg atomic receivers in MIMO systems. 
We proposed a strong-reference approximation to linearize the magnitude detection model into a real-part detection model. An IQ-aware digital precoding is accordingly designed to show that the channel capacity of atomic MIMO receiver scales as $\min(N_r/2, N_t)\log(\rm SNR) + \mathcal{O}(1)$, revealing a DoF-lossless transmission when $N_r \ge 2N_t$.
To facilitate the implementation of analog RF circuits, we further delved into the IQ-aware hybrid precoding algorithms for both FC and SC architectures. Simulation results confirmed that the proposed IQ-aware hybrid precoding can tightly approach the optimal IQ-aware digital precoding.

This work presents a significant stride towards the transmitter-side 
signal processing for enhancing atomic MIMO receivers. We conclude by outlining potential avenues for future research. Firstly, this study was limited to point-to-point MIMO systems. In the context of multi-user communications where each user utilizes an atomic receiver, it becomes crucial include inter-user interference into the IQ-aware precoding design. Secondly, our analyses and designs were built on the strong-reference approximation. In case the LO is absent or the strong-reference approximation is invalid, maximizing the mutual information of nonlinear transition model becomes necessary, which remains an open problem. Last but not the least, precoding design for the general heterodyne measurement, where the frequency of LO slightly differs from that of signal~\cite{Rydphase_Holloway2019}, is an unexplored area as well.


\appendix
\subsection{Proof of Lemma \ref{prop:sra}}\label{appendix:sra}
The $m$-th entry of the received signal is given as
        \begin{align}
         &y_m= |r_m + \mb{H}(m, :)\mb{x}+ w_m| 
        \notag \\&= |r_m|\sqrt{1 + \frac{|\mb{H}(m, :)\mb{x}+ w_m|^2}{|r_m|^2}  + 2\mathrm{Re}\left(\frac{\mb{H}(m, :)\mb{x} + w_m}{r_m}\right)} \notag\\
        &\overset{(a)}{\approx} |r_m|\left( 1 + \mathrm{Re}\left(
        \frac{\mb{H}(m, :)\mb{x} + w_m}{r_m}
        \right)+ \frac{|\mb{H}(m, :)\mb{x}+ w_m|^2}{2|r_m|^2}
        \right) 
        \notag \\
        & \overset{(b)}{=} |r_m| + \mathrm{Re}(e^{-j\angle r_n}\mb{H}(m, :)\mb{x}) + \mathrm{Re}(e^{-j\angle r_m} w_m),
    \end{align}
where (a) comes from the Tailor expansion $\sqrt{1 + x} = 1 + \frac{1}{2}x + \mathcal{O}(x^2)$ and (b) holds when $|r_m| \gg |\mb{H}(m, :)\mb{x} + w_m|$. Moreover, since $\omega_n$ is circularly symmetric distributed, the rotation $e^{-j\angle r_m}$ does not affect the distribution of $\omega_m$. Thereby, $\mathrm{Re}(e^{-j\angle r_m} w_m)$ shares the same distribution with the real part of $\omega_m$, i.e.,  $\mathcal{N}(0, \sigma^2/2)$. This completes the proof.

\subsection{Proof of Lemma \ref{lemma1}}
The optimal solution of \eqref{eq:UOP} is expressed as 
\begin{align}
   &\:\:\:\min_{\bar{\mb{A}}, \bar{\mb{D}}, \bar{\mb{F}}_c}\| (\bar{\mb{F}}, \bar{\mb{F}}_c) - \gamma \bar{\mb{A}}(\bar{\mb{D}}_u, \bar{\mb{D}}_c)\|_F^2 \notag \\
   &= \min_{\bar{\mb{A}}}\left( \min_{\bar{\mb{D}}_u} \| \bar{\mb{F}} - \gamma \bar{\mb{A}}\bar{\mb{D}}_u\|_F^2 + \min_{\bar{\mb{F}}_c, \bar{\mb{D}}_c} \| \bar{\mb{F}}_c - \gamma \bar{\mb{A}}\bar{\mb{D}}_c\|_F^2 
   \right) \notag \\ 
   & \overset{(a)}{=} \min_{\bar{\mb{A}}, \bar{\mb{D}}_u} \| \bar{\mb{F}} - \gamma \bar{\mb{A}}\bar{\mb{D}}_u\|_F^2,  
\end{align}
where (a) holds by substituting $\bar{\mb{F}}_c = \gamma \bar{\mb{A}}\bar{\mb{D}}_c$ into $\| \bar{\mb{F}}_c - \gamma \bar{\mb{A}}\bar{\mb{D}}_c\|_F^2$. This completes the proof.

\bibliographystyle{IEEEtran}
\bibliography{Reference.bib}

\end{document}